\newtheorem{definition}{Definition}
\newtheorem{lemma}{Lemma}
\newtheorem{theorem}{Theorem}
\newtheorem{assumption}{Assumption}
\newcommand{\msf}[1]{{\mathsf{#1}}}
\newcommand{\mbf}[1]{{\mathbf{#1}}}
\newcommand{\mi}[1]{{\mathit{#1}}}
\newcommand{\mc}[1]{{\mathcal{#1}}}
\newcommand{\SUA}{\msf{S} \cup \msf{A}}
\newcommand{\STDUACUN}{\msf{STD} \cup \msf{ACUN}}
\newcommand{\wellty}[1]{\mathsf{well\mbox{-}typed}{#1}}
\newcommand{\std}{\mathsf{STD}}
\newcommand{\pk}{\mi{pk}}
\newcommand{\sh}{\mi{sh}}
\newcommand{\acun}{\mathsf{ACUN}}
\newcommand{\StdOps}{\mi{StdOps}}
\newcommand{\lra}{\leftrightarrow}
\newcommand{\Lra}{\Leftrightarrow}
\newcommand{\Llr}{\Longleftrightarrow}
\newcommand{\Ra}{\Rightarrow}
\newcommand{\EnCp}{\mi{EncSubt}}
\newcommand{\Vars}{\mi{Vars}}
\newcommand{\NewVars}{\mi{NewVars}}
\newcommand{\Constants}{\mi{Constants}}
\newcommand{\Agent}{\mi{Agent}}
\newcommand{\txor}{\texttt{XOR}}
\newcommand{\xor}{\mathtt{XOR}}
\newcommand{\xorseq}[3]{#1\oplus#2 \oplus #3}
\newcommand{\mxor}[2]{#1\oplus #2}
\newcommand{\Rules}{\mi{Rules}}
\newcommand{\Terms}{T(F,\Vars)}
\newcommand{\FuncTerms}{\mi{Terms}}
\newcommand{\type}{\mi{type}}
\newcommand{\appl}{\mathsf{applicable}}
\newcommand{\satisfiable}{\mathsf{satisfiable}}
\newcommand{\normalize}{\mi{normalize}}
\newcommand{\normal}{\msf{normal}}
\newcommand{\act}{\msf{active}}
\newcommand{\fa}{\forall}
\newcommand{\ex}{\exists}
\newcommand{\simple}{\msf{simple}}
\newcommand{\semibundle}{\msf{semi\mbox{-}bundle}}
\newcommand{\conseq}{\msf{conseq}}
\newcommand{\SubTerms}{\mi{SubTerms}}
\newcommand{\op}[1]{\mi{op_{#1}}}
\newcommand{\munut}{\mu\mbox{-}\msf{NUT}}
\newcommand{\munutsat}{\mu\mbox{-}\msf{NUT}\mbox{-}\mi{Satisfying}}
\newcommand{\nutsat}{\msf{NUT}\mbox{-}\mi{Satisfying}}
\newcommand{\tnut}{\textsf{NUT}}
\newcommand{\sfs}{\msf{secureForSecrecy}}
\newcommand{\lan}{\langle}
\newcommand{\ran}{\rangle}
\newcommand{\sqss}{\sqsubset}
\newcommand{\tacun}{\textsf{ACUN}}
\newcommand{\tstd}{\textsf{STD}}
\newcommand{\VarIdP}{\mi{VarIdP}}
\newcommand{\vip}{\mi{vip}}
\newcommand{\NewConstants}{\mi{NewConstants}}
\newcommand{\pure}{\msf{pure}}
\newcommand{\tuple}[2]{\lan #1, ~#2 \ran}
\newcommand{\inseq}[2]{#1 ~~\msf{in} ~~\msf{#2}}
\newcommand{\tif}{\text{if}}
\newcommand{\FreshVars}{\mi{FreshVars}}
\newcommand{\FreshCons}{\mi{FreshCons}}
\newcommand{\LTKeys}{\mi{LTKeys}}
\newcommand{\constraint}{\msf{constraint}}
\newcommand{\SecVars}{\mi{SecVars}}
\newcommand{\SecConstants}{\mi{SecConstants}}
\newcommand{\iik}{\mi{IIK}}
\newcommand{\true}{\msf{true}}
\newcommand{\nslxor}{\msf{NSL_{\oplus}}}
\newcommand{\Term}{\mi{Term}}
\newcommand{\penc}[2]{[#1]^{\to}_{#2}}
\newcommand{\senc}[2]{[#1]^{\lra}_{#2}}
\newcommand{\isocs}{\mi{isocs}}
\newcommand{\subiso}{\sigma_{\mi{iso}}}
\newcommand{\chisocs}{\mi{chisocs}}
\newcommand{\subcomb}{\sigma_{\mi{comb}}}
\newcommand{\combcs}{\mi{combcs}}
\newcommand{\chcombcs}{\mi{chcombcs}}
\newcommand{\childseq}{\msf{childseq}}
\newcommand{\Th}{\mi{Th}}
\newcommand{\XorTerms}{\mi{XorTerms}}
\begin{document}

\title{How to prevent type-flaw and multi-protocol attacks on cryptographic protocols under Exclusive-OR}

\author{Sreekanth Malladi\footnote{Dakota State University, Madison, SD - 57042.  Email: \texttt{Sreekanth.Malladi@dsu.edu}} }

\pagestyle{plain}
  
\maketitle

\begin{abstract}

  Type-flaw attacks and multi-protocol attacks are notorious threats to cryptographic protocol security. They are arguably the most commonly reported attacks on protocols. For nearly fifteen years, researchers have continuously emphasized the importance of preventing these attacks.
  
  In their classical works, Heather et al. and Guttman et al. proved  that these could be prevented by tagging encrypted messages with distinct constants, in a standard protocol model with a free message algebra~\cite{HLS03,TG00}.

  However, most ``real-world'' protocols such as \texttt{SSL 3.0} are designed with the Exclusive-OR (\txor) operator that possesses algebraic properties, breaking the free algebra assumption. These algebraic properties induce equational theories that need to be considered when analyzing protocols that use the operator.

This is the problem we consider in this paper: We prove that, under certain assumptions, tagging encrypted components \emph{still} prevents type-flaw and multi-protocol attacks even in the presence of the \txor{} 
operator and its algebraic properties.
	
\end{abstract}

\noindent
\textbf{Keywords:}	Cryptographic protocols, Type-flaw attacks, Multi-protocol attacks, Tagging, Exclusive-OR, Algebraic properties, Equational theories, Constraint solving.

\newpage

\tableofcontents
	
\newpage
\section{Introduction}
%\addcontentsline{toc}{section}{Introduction}

A {\em type-flaw attack} on a protocol is an attack where a message
variable of one type is essentially substituted with a message of a
different type, to cause a violation of a security
property. Preventing type-flaw attacks is crucial for security
protocols since they are frequently reported in
literature~\cite{CJ97,Lowe96,Mea03}.

In their pioneer work, Heather et al. proved that pairing constants
called ``tags'' with each message, prevents type-flaw
attacks~\cite{HLS03}. However, Heather et al.'s work considered a
basic protocol model with a free message algebra. Operators such as
Exclusive-OR possess algebraic properties that violate the free
algebra assumption, by inducing equational theories.

Another very important problem for security protocols is the problem
of multiple protocols executing in parallel. This was shown to be a
major cause for attacks on protocols~\cite{KSW97,Cremers06}. In an
outstanding work, Guttman et al. tackled this problem and proved that
if distinct protocol identifiers were to be inserted as tags inside
all encrypted components, multi-protocol attacks can be
prevented~\cite{TG00}, in the same year and conference as that of
Heather et al.'s paper~\cite{HLS00}. However, like Heather et al.,
they too consider a basic and standard model with a free
term algebra.

Recent focus in research projects world-wide has been to extend
protocol analysis to protocols that use operators possessing algebraic
properties, to accommodate ``real-world'' protocols such as \texttt{SSL
  3.0} (e.g.~\cite{KustersT08,EscobarMM07}). Naturally, a
corresponding study into type-flaw and multi-protocol attacks would be both crucial and
interesting.

These are the problems we consider in this paper: We provide formal
proofs to establish that suggestions similar to those made by Heather et al. and
Guttman et al (to tag messages), are sufficient to prevent all type-flaw and multi-protocol attacks on security protocols even under the {\sf ACUN}\footnote{Associativity, Commutativity, existence of Unity and Nilpotence.} algebraic properties of the Exclusive-OR (\txor) operator.

Our proof approach extends that used by us in~\cite{Mall04}, is
general, and could be extended to other operators with equations such
as {\sf Inverse} and {\sf Idempotence} in addition to {\sf ACUN}. We
give some intuitions for these in our conclusion.

\paragraph*{Significance of the results to protocol analysis and verification.} 

Preventing type-flaw and multi-protocol attacks is obviously
beneficial to protocol security. However, there are also significant
advantages to protocol analysis and verification:

\begin{itemize}
\item As Heather et al. pointed out, preventing type-flaw attacks also
  allows many unbounded verification approaches
  (e.g.~\cite{THG98,Cohen00,HS00}) to be meaningful, since they assume
  the absence of type-flaw attacks;
\item Similarly, preventing multi-protocol attacks ensures that it is
  sufficient to analyze protocols in isolation, which was found to be
  much less complicated than analyzing in multi-protocol
  environments~\cite{Meadows99,KSW97};
\item Furthermore, knowing that these attacks can be prevented in
  advance, reduces complexity of analysis and substantially saves the
  search space for automated tools;
\item Preventing type-flaw attacks is a crucial step in achieving
  decidability results for protocol security, as identified
  in~\cite{Lowe99,RS-FST03}. With decidability results in place, protocol
  verification can be reduced to a trivial problem of analyzing a single
  session of a protocol, to conclude its security.
\end{itemize}

The main concept used by our proofs is as follows. When terms
containing \txor{} are unified, the {\sf ACUN} theory does not affect
the unifier obtained, if all the terms that are \txor ed are tagged with
constants. Thus, unifiers for unification problems involving the standard 
operators and the \txor{} operator are obtained only using the algorithm
for the standard operators. Hence, the results that were possible for
the standard operators remain intact, even when the \txor{} operator is
used in constructing messages.

\paragraph*{Organization.} In Section~\ref{s.type-flaw-ex}, we will show 
that tagging can prevent type-flaw attacks under
\txor{} using an example. In Section~\ref{s.framework}, we will formalize our
framework to model protocols and their executions. In
Section~\ref{s.some-lemmas}, we will prove some useful lemmas. In Section~\ref{s.main-results}, we will use these lemmas to achieve our main results and conclude with a discussion of future and related works. We provide an index to the notation and terminology used in the paper in Appendix~\ref{ss.index} and a detailed description of Baader \& Schulz algorithm for combined theory unification~\cite{BS96} using an example in Appendix~\ref{ss.BSCA}.

\section{Tagging prevents type-flaw attacks under \txor{}: Example}\label{s.type-flaw-ex}

In this section, we show that tagging prevents type-flaw attacks under \txor{} on an example. The example helps in elucidating our proof strategy to achieve our main results in the subsequent sections.

Consider the adapted Needham-Schroeder-Lowe protocol
($\msf{NSL}_{\oplus}$) by Chevalier et al.~\cite{CKRT03}. We further
modify it by inserting numbers $1$, $2$, and $3$ inside each encrypted
message as suggested by Heather et al.~\cite{HLS03}:

\begin{center}
\begin{tabular}{|l|}
\hline

\parbox[3,5]{4in}{ 

  \vskip 0.2in

  \begin{center}
    \begin{tabular}{ll}

      {\bf Msg 1.} $A \to B : [1,N_A,A ]_{pk(B)}$ \\

      {\bf Msg 2.} $B \to A : [2,N_A \oplus B,N_B ]_{pk(A)}$\\

			{\bf Msg 3.} $A \to B : [3,N_B]_{pk(B)}$
\end{tabular}
\end{center}

} \\[0.6in]

\hline

\end{tabular}
\end{center}

\vskip 0.1in ($A$ and $B$ are agent variables; $N_A$, $N_B$ are nonce
variables; $[X]_Y$ represents $X$ encrypted with $Y$; $\pk(X)$ is the public-key of $X$).

A type-flaw attack is possible on this protocol even in the presence
of component numbering (originally presented in~\cite{MH08}):

\begin{center}
\begin{tabular}{|l|}
\hline

\parbox[3,5]{5.5in}{ 

\vskip 0.1in

\begin{center}
\begin{tabular}{ll}

  $\mbf{\textcolor{blue}{Msg}}$ $\alpha\mbf{.1.}$ $a \to i$ : $[1,n_a,a ]_{pk(i)}$ \\

  \textcolor{red}{Msg} $\beta$.1. $i(a) \to  b$ :  $[1,n_a \oplus b \oplus i,a ]_{pk(b)}$\\

  \textcolor{red}{Msg} $\beta$.2. $b \to  i(a)$ :  $[2,n_a \oplus b \oplus i \oplus b,n_b ]_{pk(a)}$\\

  $\mbf{\textcolor{blue}{Msg}}$ $\alpha\mbf{.2.}$ $i   \to a$ : $[2,n_a \oplus i,n_b]_{pk(a)}$ (replaying \textcolor{red}{Msg} $\beta$.2)\\

  $\mbf{\textcolor{blue}{Msg}}$ $\alpha\mbf{.3.}$ $a    \to i : [3,n_b ]_{pk(i)}$\\

  \textcolor{red}{Msg} $\beta$.3. $i(a) \to  b$ : $[3,n_b]_{pk(b)}$\\

\end{tabular}
\end{center} 

} \\[0.6in]

\hline

\end{tabular}
\end{center}

\vskip 0.1in

$i$ is the identity of the intruder or attacker; $i(x)$ denotes $i$ spoofing as $x$. We use lowercase now for agent identities and nonces ($a, b, n_a, n_b$), since this is a trace of the protocol execution, not the protocol specification.

Notice the type-flaw in the first message ($n_a\oplus b \oplus i$
substituted for the claimed $N_A$) that induces a type-flaw in the
second message as well. This is strictly a type-flaw attack, since 
without the type-flaw and consequently without exploiting the
algebraic properties, the attack is not possible.

The above attack can be avoided if tagging were to be adopted for the
elements of the \txor{} operator as well:

\begin{center}
\begin{tabular}{|l|}
\hline

\parbox[3,5]{4.5in}{ 

  \vskip 0.1in

  \begin{center}
    \begin{tabular}{ll}

      {\bf Msg 1.} $A \to B : [1, N_A, A]_{\mathit{pk}(B)}$ \\

      {\bf Msg 2.} $B \to A : [ 2, [4,N_A] \oplus [5,B], N_B ]_{\mathit{pk}(A)}$\\

{\bf Msg 3.} $A \to B : [3, N_B]_{pk(B)}$\\

\end{tabular}
\end{center}

} \\[0.4in]

\hline

\end{tabular}
\end{center}

\vskip 0.1in

Now {\bf \textcolor{red}{Msg}}  $\beta${\bf .2}  is not replayable as  {\bf \textcolor{blue}{Msg}}  $\alpha${\bf .2}  even when  $i(a)$  sends 
 {\bf \textcolor{red}{Msg}}  $\beta${\bf .1}  as 

\begin{center}  
{\bf \textcolor{red}{Msg}}  $\beta${\bf .1.} $i(a) \to b : [ 1, [4,n_a] \oplus [5,b] \oplus [5,i], a
]_{\mathit{pk}(b)}$,
\end{center}

since  {\bf \textcolor{red}{Msg}}  $\beta${\bf .2}  then becomes 

\begin{center}  
{\bf \textcolor{red}{Msg}}  $\beta${\bf .2.} $b \to i(a) : [ 2, [4, [4, n_a] \oplus [5, b] \oplus [5, i]]
\oplus [5, b], n_b]_{\mathit{pk}(a)}$.
\end{center} 

This is not replayable as the required: 

\[ \mbf{\textcolor{blue}{Msg}}~\alpha{\bf .2.}~i \to a
: [2, [4, n_a] \oplus [5, i], n_b]_{\mathit{pk}(a)}
\]

because, inside  {\bf \textcolor{red}{Msg}}  $\beta${\bf .2},  one
occurence of  $[5, b]$  is in  $[4, [4, n_a] \oplus [5, b] \oplus
[5, i]]$  and the other is outside. Hence, they cannot be canceled.

This concept can be best understood when we review the attack symbolically. The crux of the attack was the unification of terms, $N_A \oplus b$ (sent by agent $b$ inside Msg~2) with $n_a \oplus i$ (expected by agent $a$ inside Msg~2). The result is a substitution of $n_a \oplus i \oplus b$ with the type $\msf{nonce} \oplus \msf{agent} \oplus \msf{agent}$ to the nonce variable $N_A$, resulting in a mismatch of types.

When we prevented the attack by adding more tags, the terms $[4,N_A] \oplus [5,b]$ and $[4,n_a] \oplus [5,i]$ had to be unified. But they are not unifiable, since no substitution to the variable $N_A$ will make them equal under the \tacun{} theory for the $\oplus$ operator. 

Note that, a substitution of $n_a \oplus b \oplus i$ to $N_A$ will make them equal, if an additional equation, say $[w, x] \oplus [y, z] = [w\oplus y, x \oplus z]$ is considered in addition to the \tacun{} theory.

In this case,  $[4,N_A] \oplus [5,b]$ will become $[4 \oplus 5, n_a \oplus b \oplus i \oplus b]$, which is equal to $[4 \oplus 5, n_a \oplus i]$, which in turn is equal to the other term to be unified, $[4, n_a] \oplus [5,i]$.
However, in this paper, we consider only the \tacun{} algebraic properties of the $\oplus$ operator, but not equations where both the standard operators such as pairing and the \txor{} operator are combined. We do provide some insights into extending our work with such equations, in our conclusion.

In Sections~\ref{s.framework}, \ref{s.some-lemmas} and \ref{s.main-results}, we will prove formally that such tagging prevents all type-flaw and multi-protocol attacks on protocols in general, under the {\sf ACUN} theory.

\section{The Framework}\label{s.framework}

In this section, we will describe our formal framework to model the design and analysis of protocols, which we subsequently use to achieve the proofs for our main results in Section~\ref{s.main-results}.

We will define the term algebra in  Section~\ref{ss.term-algebra}, the protocol model in Section~\ref{ss.prot-model}, generating symbolic constraint sequences for protocol messages and checking their satisfiability in Section~\ref{ss.constraints}, the security properties desired of protocols and attacks on them in Section~\ref{ss.security-prop} and our main protocol design requirements to prevent type-flaw and multi-protocol attacks in Section~\ref{ss.NUT-MuNUT}.

\subsection{Term Algebra}\label{ss.term-algebra}

We will first introduce the construction of protocol messages using some basic elements and operators in Section~\ref{sss.terms}. We will then introduce equational unification in Section~\ref{sss.unification}.

We derive much of our concepts here from Tuengerthal's technical report~\cite{Tuengerthal-TR-2006} where he has provided an excellent and clear explanation of equational unification.

\subsubsection{Terms}\label{sss.terms}

We will use {\em italics} font for sets, functions, and operators. On the other hand, we will use {\sf sans-serif} font for predicates, equations and theories (described in Section~\ref{sss.unification}).

We denote the {\em term algebra} as $T(F,\Vars)$, where $\Vars$ is a set of variables, and $F$ is a set of function symbols or operators, called a {\em signature}. The terms in $T(F,\Vars)$ are called $F$-Terms. Further, 

\begin{itemize}
	\item $\Vars \subset T(F,\Vars)$;
	\item $(\fa f \in F)(\msf{arity}(f) > 0 \wedge t_1, \ldots, t_n \in T(F,\Vars)  \Ra f(t_1,\ldots,t_n) \in T(F,\Vars))$.
\end{itemize}

The set of nullary function symbols are called {\em constants}. We assume that every variable and constant have a ``type" such as $\Agent$, $\mi{Nonce}$ etc., returned by a function $\type()$. 

We define $F$ as $\StdOps \cup \{ \xor \} \cup \Constants$, where, 

\[	\StdOps = \{ \mi{sequence}, \mi{penc}, \mi{senc}, \pk, \sh \}.	\] 

Further, if $f \in F$ and $t_1,\ldots,t_n \in \Terms$ then, 

\[	\type(f(t_1,\ldots,t_n)) = f(\type(t_1),\ldots,\type(t_n)).	\]

$\mi{penc}$ and $\mi{senc}$ denote asymmetric and symmetric encryption operators respectively. $\pk$ and $\sh$ denote public-key and shared-key operators respectively. We assume that they will always be used  with one and two arguments respectively, that are of the type $\Agent$. 

We use some syntactic sugar in using some of these operators: 

\begin{eqnarray*}
  \mi{sequence}(t_1,\ldots,t_n)  &	= &	[t_1, \ldots, t_n], \\
						\mi{penc}(t,k)       &	= &  [t]^{\to}_k,   \\
						\mi{senc}(t,k) 			 &	= &  {[t]^{\lra}_k},   \\
				\xor(t_1,\ldots,t_n)		 &	=	&	t_1 \oplus \ldots \oplus t_n. 
\end{eqnarray*}

We will omit the superscripts $\lra$ and $\to$ for encryptions if the mode of encryption is contextually obvious or irrelevant.

We will write $\inseq{a}{[a_1,\ldots,a_n]}$ if $a \in \{a_1,\ldots,a_n\}$. We will denote the linear ordering relation of a sequence of elements, $s$, as $\prec_s$. For instance, if $s$ is a sequence such that $s = [s_1,\ldots,s_n]$, then, $(\fa i, j \in \{1,\ldots,n\})((i < j) \Ra (s_i \prec_s s_j))$.	

%When we write symbols subscripted such as $a_1, \ldots, a_n$ we mean that they range from $a_1$ through $a_n$ in increments of 1 where $n$ is a natural number. i.e., 

%\[	([a_1, \ldots, a_n] = as) \Lra 
%		\left(
%		\begin{array}{c}
%			(n \in \mathbb{N}) \wedge  \\
%			(\fa i \in \{1,\ldots,n\})((a_i, a_{i+1} ~\msf{in} ~as) \wedge (a_i 			%				\prec_{as} a_{i+1})).
%		\end{array}
%		\right)
%\]

%($\mathbb{N}$ is the set of natural numbers).

We define the subterm relation as follows: 

\begin{center}
$t \sqss t'$ iff $t' = f(t_1,\ldots,t_n)$ where $f \in 	F$ and $t \sqss t''$ for some $t'' \in \{ t_1, \ldots, t_n \}$.
\end{center}

We will use functions $\Vars()$, $\Constants()$, and $\SubTerms()$ on a single term or sets of terms, that return the variables, constants and subterms in them respectively. For instance, if $T$ is a set of terms,

\[	\SubTerms(T) = \{ t \mid (\ex t' \in T)(t \sqsubset t') \}.	\]

\subsubsection{Equational Unification}\label{sss.unification}

We will now introduce the concepts of unification under equational theories. We will start off with some basic definitions:

\begin{definition}{\em\bf [Substitution]}\label{d.sub}

A {\em substitution} is a tuple $\tuple{x}{X}$ (denoted $x/X$), where $x$ is a term and $X$ is a variable. Let $\sigma$ be a set of substitutions and $t$ be a term. Then,

\begin{center}
\begin{tabular}{lllll}
$t\sigma$ &   $=$ $t$, &  $\tif{}$ $t \in \mi{Constants}$, \\
     			&   $=$    $t'$, & $\tif{}$ $t'/t \in \sigma$, \\
      		&   $=$    $f(t_1\sigma,\ldots,t_n\sigma)$, & $\tif{}$ $f \in F, \text{and}$ $t = f(t_1,\ldots,t_n)$. 
\end{tabular}
\end{center}

\end{definition}

We extend this definition to define substitutions to a set of terms: If $T$ is a set of terms, then, $T \sigma = \{ t\sigma \mid t \in T \}$.

We will now introduce equational theories.

\begin{definition}{\em\bf [Identity and Equational Theory]}\label{d.equation-theory}

Given a signature $F$, and a set of variables $\Vars$, a set of {\em identities} $E$ is a subset of $T(F,\Vars) \times T(F,\Vars)$. We denote an identity as $t \cong t'$ where $t$ and $t'$ belong to $T(F,\Vars)$. An {\em equational theory} (or simply a theory) $=_E$ is the least congruence relation on $T(F,\Vars)$, that is closed under substitution and contains $E$. i.e., 

\[
	 =_E  ~:= 
	 ~\left\{ 
	 		R \mid 
	 			\begin{array}{c}
	 				R ~\mathrm{is ~a ~congruence~relation~on} ~T(F,\Vars), E \subseteq R, \mathrm{and} \\
	 				(\forall \sigma)(t \cong t' \in R \Ra t\sigma \cong t'\sigma \in R)
	 			\end{array}
	 \right\}					
\]

We write $t =_E t'$ if $(t,t')$ belongs to $=_E$.

\end{definition}

For the signature of this paper, we define two theories, $=_\std$ and $=_\acun$.

The theory $=_\std$ for $\StdOps$-Terms is based on a set of identities between syntactically equal terms, except for those made with the operator $\sh$:

\begin{center}
		$\std =$
			\begin{tabular}{rcl}
  		  $\{ [t_1,\ldots,t_n]$ &	$\cong$  & $[t_1,\ldots,t_n]$, \\
%				 						[t]_k 	&	\cong	  & [t]_k, \\ 	
											$h(t)$  &  $\cong$  &  $h(t)$, \\
									$sig_k(t)$  &  $\cong$  &  $sig_k(t)$, \\
										$\pk(t)$  &  $\cong$  &  $\pk(t)$, \\
				 						$[t]_k$ 	&	 $\cong$	&  $[t]_k$, \\ 	
							$\sh(t_1,t_2)$  &  $\cong$  & $\sh(t_2,t_1)  \}$.	
			\end{tabular}
\end{center}

The theory $=_\acun$ is based on identities solely with the \txor{} ($\oplus$) operator, reflecting the algebraic properties of \txor: 
	
	\[ \acun = \{ t_1 \oplus (t_2 \oplus t_3) \cong (t_1 \oplus t_2) \oplus t_3 ,  \mxor{t_1}{t_2}  \cong  \mxor{t_2}{t_1}, \mxor{t}{0}  \cong  t,  \mxor{t}{t}  \cong 0 \}.
	\] 

We will say that a term $t$ is {\em pure} wrt theory $=_E$ iff there exists a substitution $\sigma$ and a term $t'$ such that $t = t'\sigma$ and either\footnote{Following Lowe~\cite{Lowe99}, we adopt functional programming convention and use an underscore (\_) in a formula, when the value in it doesn't affect the truthness of the formula.} $\_ \approx t'$ or $t' \approx \_$ belongs to $E$.

\[	\pure(t,=_E)   \Lra   (\ex t'; \sigma)((t = t'\sigma) \wedge ((t' \approx \_ \in E) \vee (\_ \approx t' \in E))).	\]

We will say that a term $t$ is an {\em alien subterm} of $t'$ wrt the theory $=_E$ iff it is not pure wrt $=_E$:

\[	\msf{ast}(t',t,=_E) \Lra  (t' \sqsubset t) \wedge \neg\pure(t',=_E).	\]

We will now describe equational unification. 

\begin{definition}{\em\bf [Unification Problem, Unifier]}\label{d.unification}

If $F$ is a signature and $E$ is a set of identities, then an $E$-{\em Unification Problem} over $F$ is a finite set of equations 

\[	\Gamma = \left\{ \begin{array}{c}  s_1 \stackrel{?}{=}_E t_1, \ldots, s_n \stackrel{?}{=}_E t_n \end{array} \right\} \]

between $F$-terms. A substitution $\sigma$ is called an $E$-{\em Unifier} for $\Gamma$ if $(\fa s \stackrel{?}{=}_E t \in \Gamma)(s\sigma =_E t\sigma)$. $U_E(\Gamma)$ is the set of all $E$-Unifiers of $\Gamma$. A $E$-Unification Problem is called $E$-{\em Unifiable} iff $U_E(\Gamma) \neq \{ \}$.

A {\em complete} set of $E$-Unifiers of an $E$-Unification Problem $\Gamma$ is a set $C$ of idempotent $E$-Unifiers of $\Gamma$ such that for each $\theta \in U_E(\Gamma)$ there exists $\sigma \in C$ with $\sigma \ge_E \theta$, where $\ge_E$ is a partial order on $U_E(\Gamma)$.

\end{definition}

An $E$-{\em Unification Algorithm} takes an $E$-Unification Problem $\Gamma$ and returns a finite, complete set of $E$-Unifiers.

Hence forth, we will abbreviate ``Unification Algorithm" to UA and ``Unification Problem" to UP.

Two theories $=_{E_1}$ and $=_{E_2}$ are {\em disjoint} if $E_1$ and $E_2$ do not use any common function symbols. UAs for two disjoint theories may be combined to output the complete set of unifiers for  UPs made with operators from both the theories, using Baader \& Schulz Combination Algorithm (BSCA)~\cite{BS96}. 

%BSCA first takes as input a $(E_1 \cup E_2)$-UP, say $\Gamma$, and applies some transformations on them to derive $\Gamma_{5.1}$ and $\Gamma_{5.2}$ that are  $E_1$-UP and $E_2$-UP respectively. It then combines the unifiers for $\Gamma_{5.1}$ and $\Gamma_{5.2}$ obtained using $E_1$-UA and $E_2$-UA respectively, to return the unifier(s) for $\Gamma$. Further, if $\Gamma$ is $(E_1 \cup E_2)$-Unifiable, then there exist $\Gamma_{5.1}$ and $\Gamma_{5.2}$ that are $E_1$-Unifiable and $E_2$-Unifiable respectively.

%We give a more formal and detailed explanation of BSCA in  Section~\ref{s.some-lemmas}, where we use it in detail to prove some lemmas.
\subsection{Protocol Model}\label{ss.prot-model}

We will now introduce our protocol model, which is based on the strand space framework~\cite{THG98}. 

\begin{definition}{{\em\bf [Node, Strand, Protocol]}}\label{d.node-strand}
	A {\em node} is a tuple $\tuple{\pm}{t}$ denoted $\pm t$ where $t \in T(F,\Vars)$. 
  A {\em strand} is a sequence of nodes. 
  A {\em protocol} is a set of strands called ``roles".
\end{definition}

Informally, we write $+t$ if a node ``sends" term $t$ and $-t$ if it ``receives" $t$.  Further, if $\tuple{s}{t}$ is a node, then, $\tuple{s}{t}\sigma = \tuple{s}{t\sigma}$.

As an example for strands and protocols, consider the $\nslxor$ protocol presented in Section~\ref{s.type-flaw-ex}. This protocol that has two roles, $\mi{role_A}$ and $\mi{role_B}$. i.e., 

\[  \nslxor = \{ \mi{role_A}, \mi{role_B} \},  \]

where

\noindent
\begin{center}
\begin{tabular}{rll}
	$\mi{role_A}$ & $= [ +[1,A,N_A]_{\pk(B)}, -[2,N_A \oplus B,N_B]_{\pk(A)}, 						+[3,N_B]_{\pk(B)}  ]$, \text{and}\\ 
	$\mi{role_B}$ & $= [ -[1,A,N_A]_{\pk(B)}, +[2,N_A \oplus B,N_B]_{\pk(A)}, 						-[3,N_B]_{\pk(B)}  ]$.
\end{tabular}
\end{center}

We define a function $\FuncTerms()$ to return all the terms in the nodes of a strand. If $r$ is a strand, then,

\[	\FuncTerms(r)	=	\{	t \mid \tuple{\_}{t} ~\msf{in}	~r \}.	\]

We will also overload the functions $\Vars()$, $\Constants()$, and $\SubTerms()$ that were previously defined on sets of terms to strands in the obvious way. For instance, if $s$ is a strand, then,

\[
	\begin{array}{l}
		\SubTerms(s) = \{ t \mid (\ex t')((t' \in \FuncTerms(s)) \wedge (t \sqsubset t'))	\},	\\
		\Vars(s)	=  \Vars(\SubTerms(s)),	\\
		\Constants(s) = \Constants(\SubTerms(s)).
	\end{array}
\]

A {\em semi-bundle}	$S$ for a protocol $P$ is a set of strands formed by applying substitutions to some of the variables in the strands of $P$: If $P$ is a protocol, then, 

\[	
	\semibundle(S,P) \Ra  (\fa s \in S)((\ex r \in P; \sigma)( s = r\sigma )).
\]

For instance, $S = \{ s_{a1}, s_{a2}, s_{b1}, s_{b2} \}$ below is a semi-bundle for the $\nslxor$ protocol with two strands per role of the protocol: 

\begin{center}
\begin{tabular}{rcl}

$s_{a1}$  &  $=$  &  $[ +[a1,n_{a1}]_{\pk(B1)}, -[n_{a1} \oplus B1,N_{B1}]_{\pk(A1)}, +[N_{B1}]_{\pk(B1)} ]$, \\
$s_{a2}$  &  $=$  &  $[ +[a2,n_{a2}]_{\pk(B2)}, -[n_{a2} \oplus B2,N_{B2}]_{\pk(A2)}, +[N_{B2}]_{\pk(B2)} ] $, \\
$s_{b1}$  &  $=$  &  $[ -[A_3,N_{A3}]_{\pk(b1)}, +[N_{A3} \oplus b1,n_{b1}]_{\pk(A3)}, -[n_{b1}]_{\pk(b1)}  ] $,  \\
$s_{b2}$  &  $=$  &  $[ -[A_4,N_{A4}]_{\pk(b2)}, +[N_{A4} \oplus b2,n_{b2}]_{\pk(A4)}, -[n_{b2}]_{\pk(b2)}  ] $.

\end{tabular}
\end{center}

({\em Note}: As stated earlier,  we use lower-case symbols for constants and upper-case   for variables).

We will assume that every protocol has a set of variables that are considered ``fresh variables" (e.g. Nonces and Session-keys). If $P$ is a protocol, then, $\FreshVars(P)$ denotes the set of fresh variables in $P$. We will call the constants substituted to fresh variables of a protocol in its semi-bundles as ``fresh constants" and denote them as $\FreshCons(S)$. i.e., If $\semibundle(S,P)$, then,

\[
\FreshCons(S) = 
\left\{ 
\begin{array}{c}
 x \mid  
 \left(
 	\begin{array}{c}
	 	\ex r \in P; s \in S; \\
	 	\sigma; X
	\end{array}
 \right)
	\left(
		\begin{array}{c}
			(r\sigma = s) \wedge (X \in \FreshVars(P)) \wedge \\ 
			(x = X\sigma) \wedge (x \in \Constants)
		\end{array}
	\right)
\end{array}
\right\}.
\]

We assume that some fresh variables are ``secret variables" and denote them as $\SecVars(P)$. We define ``$\SecConstants()$" to return ``secret constants" that were used to instantiate secret variables of a protocol: If $\semibundle(S,P)$, then, 

\[
\SecConstants(S) = 
\left\{ 
\begin{array}{c}
 x \mid 
 \left(
 	\begin{array}{c}
	 	\ex r \in P; s \in S; \\
	 	\sigma; X
	\end{array}
 \right)
	\left(
		\begin{array}{c}
			(r\sigma = s) \wedge (X \in \SecVars(P)) \wedge \\ 
			(x = X\sigma) \wedge (x \in \Constants)
		\end{array}
	\right)
\end{array}
\right\}.
\]

For instance, $N_A$ and $N_B$ are secret variables in the $\msf{NSL}_{\oplus}$ protocol and $n_{a1}, n_{a2}, n_{b1}, n_{b2}$ are the secret constants for its semi-bundle above.

We will lift the functions $\Vars()$, $\Constants()$, $\SubTerms()$, and $\FuncTerms()$ that were previously defined on sets of terms and strands, to sets of strands. For instance, if $S$ is a set of strands, then,

\[ 
	\begin{array}{l}
		\SubTerms(S) = \{ t  \mid  (\ex x \in S)(t \in \SubTerms(x)) \},	\\
		\Constants(S)	=	\Constants(\SubTerms(S)),	\\
		\Vars(S)	=  \Vars(\SubTerms(S)), \\
		\FuncTerms(S) = \{	t \mid (\ex s \in S)(t \in \FuncTerms(s))	\}.	
	\end{array}
\]

We denote the long-term shared-keys of a protocol $P$ as $\LTKeys(P)$, where,

\noindent
\begin{eqnarray*}
\LTKeys(P) &  =  &  \{ x  \mid  (\ex A, B)( (x = \mi{sh}(A,B)) \wedge (x \in \SubTerms(P)) ) \}. 
\end{eqnarray*}

To achieve our main results, we need to make some assumptions. Most of our assumptions are reasonable, not too restrictive for protocol design and in fact, good design practices that improve security.

Before we start off with our first assumption, we will define a predicate $\wellty()$ on substitutions such that a substitution  is said to be well-typed, if the type of the variable is the same as that of the term it is substituted for: 

\[  (\fa t \in \Terms; X \in \Vars)((\wellty(t/X) \Llr (\type(t) = \type(X)))).  \]

We extend $\wellty()$ on sets of substitutions such that a set of substitutions is well-typed if all its elements are well-typed:

\[ (\fa \sigma)(\wellty(\sigma) \Llr (\fa t/X \in \sigma)(\wellty(t/X))). \]

We will now use this predicate to describe our first assumption which states  that the substitutions that are used on roles to form semi-strands, are always well-typed. This assumption is needed to achieve our result on type-flaw attacks.

\begin{assumption}{{\em\bf (Honest agent substitutions are always well-typed)}}\label{a.wellty}

If $\sigma$ is a set of substitutions that was used on a role to form a semi-strand, then $\sigma$ is well-typed:

	\[	(\fa \sigma)(\semibundle(S,\_) \wedge (\_\sigma \in S) \Ra \wellty(\sigma)).  \]
\end{assumption}

As noted in~\cite{CD-fmsd08}, for protocol composition or independence to hold, we first need an assumption that long-term shared-keys are never sent as part of the payload of messages in protocols, but only used as encryption keys. Obviously, this is a prudent and secure design principle.

Without this assumption, there could be multi-protocol attacks even when Guttman-Thayer suggestion of tagging encryptions is followed. For instance, consider the following protocols:

\begin{center}
\begin{tabular}{|c|c|}
	\hline	
	$\mbf{P_1}$ 	&	   $\mbf{P_2}$ \\  \hline
	1. $a \to s : \sh(a,s)$		&		  1. $a \to b : [1,n_a]_{\sh(a,s)}$ \\
	\hline
\end{tabular}
\end{center}

	Now the message in the second protocol could be decrypted with $\sh(a,s)$ and $n_a$ could be derived from it, when it is run with the first protocol.
	
	To formalize this assumption, we define a relation {\em interm} denoted $\Subset$ on terms such that, a term $t$ is an interm of $t'$ if it is a subterm of $t'$, but does not appear as an encryption key or inside a hash or a private-key signature. Formally, 

\begin{itemize}

	\item $t \Subset t'$ if $t = t'$, 
	\item $t \Subset [t_1,\ldots,t_n]$ if $(t \Subset t_1 \vee \ldots \vee t 								\Subset  t_n)$,
	\item $t \Subset [t']_k$ if $(t \Subset t')$,
	\item $t \Subset t_1 \oplus \ldots \oplus t_n$ if $(t \Subset t_1) \vee \ldots 					\vee (t \Subset t_n)$.

\end{itemize}

Notice that an interm is also a subterm, but a subterm is not necessarily an interm. For instance, $n_a$ is an interm and a subterm of $n_a \oplus [a]^{\to}_{n_b}$, while $n_b$ is a subterm, but not an interm.

Interms are useful in referring to the plain text of encryptions and in general, the ``payload" of messages. i.e., everything that can be ``read" by the recipient of a term. Contrast these with the keys of encrypted terms, which can only be confirmed by decrypting with the corresponding inverses, but cannot be read (unless included in the plain-text).

\begin{assumption}\label{a.LTKeys}
	If $P$ is a protocol, then, there is no term of $P$ with a long-term key as an interm: 
	
	\[ (\fa t \in \SubTerms(P))( (\nexists t' \Subset t)(t' \in \LTKeys(P)) ). \]
	
\end{assumption}

It turns out that this assumption is not sufficient. As noted by an anonymous reviewer of a workshop version of this paper~\cite{MallFCS10}, we also need another assumption that if a variable is used as a subterm of a key, then there should be no message in which that variable is sent in plain (since a long-term shared-key could be substituted to the variable as a way around the previous assumption).

Hence, we state our next assumption as follows:

\begin{assumption}\label{a.Key-Var}

	If $[t]_k$ is a subterm of a protocol, then no variable of $k$ is an interm of the protocol:
	
	\[	
		(\fa [t]_k \in \SubTerms(P))(\nexists X \Subset k; t' \in \SubTerms(P))
		\left(
			\begin{array}{c}
					(t' \neq k) \wedge \\
					(X \in \Vars) \wedge \\
					(X \Subset t')
			\end{array}
		\right).
	\]

\end{assumption}

Next, we will make some assumptions on the initial intruder knowledge. We will denote the set of terms known to the intruder before protocols are run, $\iik$. We will first formalize the assumption that he knows the public-keys of all the agents:

\begin{assumption}\label{a.IIK1}
	$(\fa x \in \Constants)(\pk(x) \in \iik)$.
\end{assumption}

In addition, we will also assume that the attacker knows the values of all the constants that were substituted by honest agents for all the non-fresh variables (e.g. agent identities $a, b$ etc.), when they form semi-strands:

\begin{assumption}\label{a.IIK2}

Let $P$ be a protocol. Then, 
	
	\[	(\fa x/X \in \sigma; r \in P)
	\left(
	\left(
	\begin{array}{c}
		 \semibundle(S,P) \wedge 
		 (r\sigma \in S) \wedge \\
		 (x \in \Constants)  \wedge 
		 (X \notin \FreshVars(P))
	\end{array}
	\right)
	\Ra
					(x \in \iik)
	\right).
	\]

\end{assumption}

Finally, we make another conventional assumption about protocols, namely that honest agents do not reuse fresh values such as nonces and session-keys:

\begin{assumption}\label{a.freshness}
	
		Let $S_1, S_2$ be two different semi-bundles. Then, 
			\[	\FreshCons(S_1) \cap \FreshCons(S_2) = \{ \}.	\]

\end{assumption}

\subsection{Constraints and Satisfiability}\label{ss.constraints}

In this section, we will formalize the concepts of generating symbolic constraints from node interleavings of semi-bundles and also the application of symbolic reduction rules to determine satisfiability of those constraints. These concepts are derived from the works of Millen-Shmatikov~\cite{MS01} and Chevalier~\cite{Chev04}, who later extended Millen-Shmatikov's model with the \txor{} operator. 

Formalizing constraint satisfiability allows us to rigorously model and reason about protocol executions and the security properties held within the executions: A satisfiable constraint sequence leads to a substitution when rules are applied on it and the substitution can be applied on protocols to generate protocol executions.

\begin{definition}{{\em\bf [Constraints, Constraint sequences]}}\label{d.constraints}

A {\em constraint} is a tuple $\tuple{m}{T}$ denoted $m : T$, where $m$ is a term called the {\em target} and $T$ is a set of terms, called the {\em term set}:

\[	\constraint(\tuple{m}{T}) \Ra (m \in \Terms) \wedge (T \in \mathcal{P}(\Terms)). \]

A {\em constraint sequence} is a sequence of constraints. A {\em constraint sequence} is from a semi-bundle if its targets and terms in term sets belong to strands in the semi-bundle. i.e., If $S$ is a semi-bundle, then, $cs$ is a constraint sequence of $S$, or

	\[	\conseq(cs,S)~\tif  \]

\begin{itemize}

	\item[(a)]	every target in $cs$ is from a `$-$' node of a strand in $S$:
	
\[	(\fa m : T~\msf{in}~cs)((\ex s \in S; n~\msf{in}~s)(n = -m)).  
\] 
		
	\item[(b)]	every term in every term set of $cs$ is from a `$+$' node of a strand in $S$:
	
		\[		(\fa m : T~\msf{in}~cs; t \in T)( (\ex s \in S; n~\msf{in}~s)(n = +t) ).		\]

\end{itemize}

\end{definition}	
	
A ``simple" constraint is a constraint whose target term is a variable. i.e., A constraint $m : T$ is simple if $m$ is a variable:

\[	\msf{simple}( m : T ) \Ra (m \in \mi{Vars}).   \]

A ``simple" constraint sequence is a sequence with all simple constraints. i.e., If $cs$ is a constraint sequence, then,

\[ \msf{simple}(cs)  \Ra  (\fa c ~\msf{in} ~cs)(\msf{simple}(c)).  \]

The ``active constraint" of a constraint sequence is the constraint in the sequence whose prior constraints are all simple constraints:

\[	
\act(c,cs) \Ra ( (c ~\msf{in} ~cs) \wedge (\fa c' ~\msf{in} ~cs)( (c' \prec_{cs} c) \Ra  \simple(c') ) ).
\]

We denote the sequence of constraints before the active constraint $c$ of a constraint sequence $cs$ as $cs_<$ and those after $c$ as $cs_>$. i.e., 

\[	cs = cs_<^{\frown}c^{\frown}cs_>. \]

\noindent
if $\act(c,cs)$ is {\sf true}, where $^{\frown}$ is the sequence concatenation operator.

Next, we define some symbolic reduction rules that can be applied on the active constraint of a constraint sequence. We name the set of all such rules as $\Rules$ where 

\[	\Rules = \{ \msf{un, ksub, join, split, senc, penc, sdec, pdec, hash, sig, xor_l, xor_r}  \}.  \]

Before defining the rules, we will explain a notation. If $c = m : T$ is a constraint and $\tau$ is a set of substitutions, then,

\[	c\tau = m\tau : T\tau. \]

In Table~\ref{t.rules}, we define $\mi{Rules}$, that can be applied on the active constraint of a constraint sequence.

\begin{table*}
\begin{center}
\begin{tabular}{|c|c|c||c|c|c|}
	\hline 

  $\msf{concat}$   &   $[t_1,\ldots,t_n]:T$   &   $t_1:T$,\ldots,$t_n:T$   &   $\msf{split}$   &   $t:T \cup [t_1,\ldots,t_n]$	&	 $t:T \cup t_1 \cup \ldots \cup t_n$ \\  \hline
 
  $\msf{penc}$   &   $[m]^{\to}_k:T$   &   $k:T,m:T$   &   $\msf{pdec}$   &   $m:[t]^{\to}_{\mi{pk}(\epsilon)}\cup T$	  &	 $m:t\cup T$ \\  \hline

  $\msf{senc}$   &   $[m]^{\leftrightarrow}_k:T$   &   $k:T,m:T$   &   $\msf{sdec}$   &   $m:[t]^{\leftrightarrow}_k\cup T$	&	 $k:T,m:T\cup\{t,k\}$ \\  \hline

  $\msf{sig}$  & $sig_k(t):T$  &  $t:T$ &  
  $\msf{hash}$ & $h(t):T$					&	 $t:T$ \\  \hline

  $\msf{xor_r}$   &   $m : T \cup$   &   $t_2 \oplus \ldots \oplus t_n : T,$  &   $\msf{xor_l}$   &    $t_1 \oplus \ldots \oplus t_n : T$	 & $t_2 \oplus \ldots \oplus t_n : T$,  \\  
 
 &	$t_1 \oplus \ldots \oplus t_n$	&		$m : T \cup t_1$ &		&		&	 $t_1 : T$ 	 \\ \hline

\end{tabular}
\end{center}

\caption{Set of reduction rules, {\em Rules}}\label{t.rules}

\end{table*}

The first column is the name of the rule, the second and third columns are the active constraints before and after the application of the rule.

We define a predicate $\appl()$ on each of these rules, that is true if the rule under consideration is applicable on the active constraint of the given constraint sequence. The predicate takes the name of the rule, the input sequence $cs$, the output sequence $cs'$, input substitution $\sigma$, output substitution $\sigma'$, and the theory $\Th$ considered as arguments. For instance, we define $\msf{xor_r}$ as follows:

\[
	\appl(\msf{xor_r},cs,cs',\sigma,\sigma',\Th)  
	\Lra
	(\ex m, T, t)
	\left(  
	\begin{array}{l}
	\act(m:T \cup t_1 \oplus \ldots \oplus t_n, cs) \wedge (\sigma' = \sigma) \wedge \\
	 (cs' = cs_< ^{\frown}[t_2 \oplus \ldots \oplus t_n:T, 
	 m:T \cup t_1]^{\frown} cs_>) 	 
	\end{array}
	 \right)
\]

Note that we did not use brackets $\{ \}$ for singleton sets, to avoid notational clutter. For instance, we write $m : T \cup t_1$, instead of $m : T \cup \{t_1\}$ since it is unambiguous. 

We left out two important rules in the table, $\msf{un}$ and $\msf{ksub}$, that are the only rules that change the attacker substitution through unification. We describe them next:

\[
	\appl(\msf{un},cs,cs',\sigma,\sigma',=_E)  
	\Lra
	(\ex m, T, t)
	\left(  
	\begin{array}{l}
	\act(m:T \cup t, cs) \wedge 
	 (cs' = cs_< \tau^{\frown}cs_>\tau) \wedge \\
	 (\sigma' = \sigma \cup \tau) \wedge (\tau \in U_E(\{m  \stackrel{?}{=}_E t\}))
	\end{array}
	 \right)
\]

\[
	\appl(\msf{ksub},cs,cs',\sigma,\sigma',=_E)  
	\Lra
	(\ex m, T, t)
	\left(  
	\begin{array}{l}
	\act(m : T \cup [t]^{\to}_k, cs) \wedge \\
	 (cs' = cs_< \tau^{\frown}[m\tau : T\tau \cup [t]^{\to}_k\tau]^{\frown} 
	  cs_>\tau) \wedge \\
	 (\sigma' = \sigma \cup \tau) \wedge (\tau \in U_E(\{k \stackrel{?}{=}_E \mi{pk}(\epsilon)\}))
	\end{array}
	 \right)
\]

({\em Note}: $\epsilon$ is a constant of type $\Agent$, representing the name of the attacker, always belonging to $\iik$).

We will say that a constraint sequence $cs'$ is a {\em child constraint sequence} of another sequence $cs$, if it can be obtained after applying some reduction rules on $cs$ in the theory $\Th$:  

\[
	\childseq(cs,cs',\Th) \Lra
	(\ex r_1, \ldots, r_n \in \Rules)
	\left(
		\begin{array}{l}
				\appl(r_1,cs,cs_1,\sigma,\sigma_1,\Th) \wedge \\
				\appl(r_2,cs_1,cs_2,\sigma_1,\sigma_2,\Th) \wedge \ldots \wedge \\
				\appl(r_n,cs_{n-1},cs',\sigma_{n-1},\sigma_n,\Th) 
		\end{array}								
	\right).
\]

%%%%%%%%%%%%%%%%%%%%%%%%%%%%%%%%%%%%%%%%%
%					Other rules to come here...
%			
%%%%%%%%%%%%%%%%%%%%%%%%%%%%%%%%%%%%%%%%%

We now define ``normal" constraint sequences, where the active constraint does not have sequences on the target or in the term set and has stand-alone variables in the term set (also recall that by definition, the target term of an active constraint is not a variable):

\[
	\normal(cs) \Lra
	\left(	
	 \begin{array}{c}
	 \act(m : T, cs)\wedge \\
	 (\nexists t_1,\ldots, t_n)( [t_1,\ldots,t_n] = m )\wedge  \\
	 ( (\fa t \in T)( (\nexists t_1,\ldots, t_n)( [t_1,\ldots,t_n] = t) ) \wedge \\
	 (\fa t \in T)(t \notin \mi{Vars}) )
	 \end{array}
	\right)
\]

Next, we will define a recursive function, $\normalize()$, that maps constraints to constraint sequences such that:

\begin{tabbing}
	$\normalize(m : T)$ \=  $=$ \= $[m : T]$,  if $\normal(m : T)$;\\
	\>  $=$	\>  $\normalize(t_1 : T)^{\frown}\ldots^{\frown}\normalize(t_n : T)$ if $m = [t_1,\ldots,t_n]$; \\
	\>  $=$ \>	$\normalize(m : T' \cup t_1 \cup \ldots \cup t_n)$ 
			if $T = T' \cup [t_1,\ldots,t_n]$.
\end{tabbing}

We will now overload this function to apply it on constraint sequences as well:

\begin{tabbing}
	$\normalize(cs)$ \= $=$  \=  $cs$, if $\normal(cs)$ \\
	\>  $=$ \> 	$cs_<^{\frown}\normalize(c)^{\frown}cs_>$, if $\act(c,cs)$.
\end{tabbing}

We define satisfiability of constraints as a predicate ``$\satisfiable$" which is true if there is a sequence of applicable rules which reduce a given normal constraint sequence $cs$ to a simple constraint sequence $cs_n$, in a theory $\Th$, resulting in a substitution $\sigma_n$: 

\begin{equation}\label{e.satisfiable}
\begin{array}{c}
\satisfiable(cs,\sigma_n,\Th) \Ra \\
	(\ex r_1,\ldots,r_n \in \Rules)
	\left(
	\begin{array}{l}
 		\appl(r_1,cs,cs_1,\{\}, \sigma_1, \Th) \wedge \\
  	\appl(r_2,cs'_1,cs_2,\sigma_1,\sigma_2, \Th) \wedge \ldots \wedge \\
    \appl(r_n,cs'_{n-1},cs_n,\sigma_{n-1},\sigma_n, \Th) \wedge \\
    \msf{simple}(cs_n) \wedge\\
   	(\fa i \in \{ 1,\ldots,n\})(cs'_i = \normalize(cs_i))
	\end{array}
\right).
\end{array}
\end{equation}

Notice the last clause which requires that every constraint sequence be normalized before any rule is applied, when checking for satisfiability.

This definition of satisfiability may seem unusual, especially for the puritans,  since satisfiability is usually defined using attacker capabilities as operators on sets of ground terms to generate each target on constraints.

However, it was proven in \cite{Chev04} that the decision procedure on which our definition is based, is sound and complete with respect to attacker capabilities on ground terms in the presence of the algebraic properties of \txor. Hence, we defined it directly in terms of the decision procedure, since we will be using only that to prove our main theorem. We refer the interested reader to \cite{MS01} and \cite{Chev04} for more details on the underlying attacker operators, whose usage is equated to the decision procedure that we have used.

Note also that our definition only captures completeness of the decision procedure wrt satisfiability, not soundness, since that is the only aspect we need for our proofs in this paper.

\subsection{Security properties and attacks}\label{ss.security-prop}

Every security protocol is designed to achieve certain goals (e.g. secure key establishment, authentication). Correspondingly, every execution of a protocol is expected to satisfy some security properties. For instance, a key establishment protocol should not leak the key being established, which would be a violation of secrecy. Similarly, a key establishment protocol should not lead an honest agent to exchange a key with an attacker which would be a violation of both secrecy and authentication.

Security properties such as secrecy can be tested if they hold on executions of protocols, by forming semi-bundles of the protocols, forming constraint sequences from the semi-bundles, adding the desired property to be tested to the constraint sequences and then checking if the resulting constraint sequence is satisfiable.

For instance, consider the following constraint sequence from a semi-bundle of the $\nslxor$ protocol:

\begin{center}
\begin{tabular}{rcl}
$[1,N_A,A]_{\mi{pk}(b)}$ & : & $[1,n_a,a]_{\mi{pk}(B)} \cup \iik$ ($= T_1$) \\
$[2,n_a\oplus B,N_B ]_{\mi{pk}(a)}$ 
 & : & $[2,N_A \oplus b, n_b ]_{\mi{pk}(A)} \cup T_1$ ($= T_2$)  \\
$[3,n_b]_{\mi{pk}(b)}$ & : & $[3,N_B]_{\mi{pk}(B)} \cup T_2$ \\
$n_b$		&	:	&	$T_2$.
\end{tabular}
\end{center}

The first three constraints are obtained from a semi-bundle with one strand per role of the $\nslxor$ protocol. The last constraint is an artificial constraint added to them, to test if secrecy is violated in the sequence.

If the constraint sequence is solved by applying the rules previously defined, it shows that the nonce $n_b$, which is supposedly secret, can be obtained by the attacker by interleaving the messages of honest agents $a$ and $b$. Specifically, we would apply $\msf{penc}$ to the first constraint, and split it into $[1,N_A,A] : T_1$ and $\pk(b) : T_1$. We would then apply $\msf{pair}$ to split the former into three constraints: $1 : T_1, N_A : T_1$, and $A : T_1$. Next, rule $\msf{un}$ is applied on the second constraint, unifying terms $[2,n_a\oplus B,N_B ]_{\mi{pk}(a)}$ and $[2,N_A \oplus b, n_b ]_{\mi{pk}(A)}$. The resulting unifier $\{ n_a \oplus b \oplus i/N_A, \epsilon/B, n_b/N_B \}$,  is applied on the term in the third constraint, $[3,N_B]_{\mi{pk}(B)}$, making it $[3,n_b]_{\mi{pk}(\epsilon)}$. Finally, $n_b$ can be extracted from this term using $\msf{pdec}$ and $\msf{pair}$, satisfying the last constraint.

Our definition of type-flaw attacks is general, and is valid for any property such as secrecy that can be tested on satisfiable constraint sequences from semi-bundles of protocols.

\begin{definition}{{\em\bf [Type-flaw attack]}}\label{d.type-flaw-attack}

A protocol has a {\em type-flaw attack} in the theory $\Th$ iff there exists a semi-bundle from the protocol that has a constraint sequence satisfiable only with a substitution that is not well-typed: i.e., if $P$ is a protocol, then: 

\begin{flushleft}
	$(\fa \msf{cs}, S)
  \left(
\begin{array}{l}
  \semibundle(S,P)   \wedge \conseq(\msf{cs},S) \wedge \\
   (\ex \sigma)(\satisfiable(\msf{cs},\sigma, \Th) \wedge  \neg\wellty(\sigma)) \wedge \\
   (\nexists \sigma')( \satisfiable(\msf{cs},\sigma', \Th) \wedge \wellty(\sigma') )
\end{array}
	\right) \Lra \msf{typeFlawAttack}(P, \Th)$.
\end{flushleft}

\end{definition}

While our result on type-flaw attack is general and valid for any trace property, we achieve our other result on multi-protocol attacks in the context of secrecy (extensible to other properties such as authentication).  Accordingly, we provide a definition for the property below.

\begin{definition}{{\em\bf [Secrecy]}}\label{d.secrecy}

A protocol is {\em secure for secrecy} in the theory $\Th$, if no constraint sequence from semi-bundles of the protocol is satisfiable, after a constraint  with its target as a secret constant of the semi-bundle and its term set as the term set of the last constraint of the sequence is added as the last constraint of the sequence. i.e., if $P$ is a protocol, then, 
\[
\sfs(P,\Th) \Lra
 (\nexists \mi{sec}, \msf{cs}, S)
	\left(
	\begin{array}{c}
	  \semibundle(S,P) \wedge \conseq(\msf{cs},S) \wedge \\
   (\msf{cs} = [\_ : \_, \ldots, \_ : T]) \wedge \\
			   (\mi{sec} \in \SecConstants(S)) \wedge \\
			   \satisfiable(\msf{cs}^{\frown}[\mi{sec} : T], \sigma, \Th) 
	\end{array}
	\right).
\]

\end{definition}

\subsection{Main Requirements -- {\sf NUT} and $\munut$}\label{ss.NUT-MuNUT}

We now formulate our main requirements on protocol messages to prevent all type-flaw and multi-protocol attacks in the $=_{\SUA}$ theory\footnote{$\SUA$ is an abbreviation for $\STDUACUN$.}. The requirements are slight variations of the suggestions by Heather et al. and Guttman et al., who suggest inserting distinct component numbers inside encryptions. In a symbolic model, such component numbering guarantees \tnut{} (Non-Unifiability of encrypted Terms).

We will first define a function $\EnCp()$ that returns all the encrypted subterms of a term\footnote{$\mathcal{P}(X)$ is the power-set of the set $X$.}:

\[	\EnCp	:	\Terms	\to	\mathcal{P}(\Terms)	\]

where, if $m$ is a term, then, $\EnCp(m)$ is the set of all terms such that if $t$ belongs to the set, then $t$ must be a subterm of $m$ and is an encryption, hash or signature:

\[
	\EnCp(m) = 
	\left\{
	\begin{array}{c}
		 t   \mid   
		 (\ex t',k' \in \Terms)
		 \left(
		 \begin{array}{c}
	      (t \sqss m)	\wedge \\
		 ((t = [t']^{\to}_{k'}) \vee (t = 											[t']^{\lra}_{k'}) \vee  \\ 
		 (t = h(t')) \vee (t = \mi{sig}_{k'}(t'))) \\
	  	\end{array}  
	  \right)  
	\end{array}	   
	\right\}.
\]

Further, if $S$ is a set of strands, then, it's encrypted subterms are the encryptions of it's subterms:

\[	\EnCp(S) = \{ t \mid (\ex t')((t' \in \SubTerms(S)) \wedge (t \in \EnCp(t'))) \}.	\]

\begin{definition}{{\em\bf [NUT]}}\label{d.NUT}

A protocol $P$ is $\nutsat$, i.e.,

\[	\nutsat(P)~\text{iff}   \]

\begin{itemize}

	\item[(a)] An encrypted subterm of the protocol is not $\std$-Unifiable with any other non-variable subterm of the protocol:
	
		\[	
				(\fa t_1, t_2)
				\left(
					\left(
						\begin{array}{c}
							(t_2 \notin \Vars) \wedge \\
							(t_1 \in \EnCp(P))	\wedge \\
							(t_2 \in \SubTerms(P))	\wedge \\
							(t_1 \neq t_2)
						\end{array}
					\right)
				\Ra
						((\fa \sigma_1, \sigma_2)(U_{\std}(t_1\sigma_1, t_2\sigma_2) = \{\}))
				\right).
		\]

	\item[(b)] A key used in an asymmetric encryption is not a free variable:
	
	\[	(\fa t \in \EnCp(P))
			\left(
				\begin{array}{c}
					(\ex t', k)((t = [t']^{\to}_{k}) \Ra (k \notin \Vars))
				\end{array}
			\right).
	\]

	\item[(c)] If an \txor{} term, say $t_1 \oplus \ldots \oplus t_n$, is a  										subterm of $P$, then, no two terms in $\{ t_1, \ldots, t_n \}$ 									are $\std$-Unifiable, unless they are equal:
	
	\[
		(\fa t_1 \oplus \ldots \oplus t_n \in \SubTerms(P);
			 t, t')
			\left(
			 \begin{array}{c}
			 		(t, t' \in \{ t_1, \ldots, t_n \}) \wedge (t \neq t') \Ra \\
			 		(\fa \sigma, \sigma')(U_{\std}(t\sigma, t'\sigma') = \{\}) 
			 \end{array}
		\right).
	\]

\end{itemize}

\end{definition}

The first requirement can be satisfied by simply inserting distinct component numbers inside distinct encrypted subterms of a protocol, as was done in the $\nslxor$ protocol in Section \ref{s.type-flaw-ex}.

The second requirement can be satisfied by adding a distinct constant to the key of an asymmetric encryption, if it was a free variable. For instance, $[1, N_A, B]^{\to}_K$ can be transformed into $[1, N_A, B]^{\to}_{[2,K]}$.

The third requirement can also be satisfied in much the same way as the other two. We can add a distinct constant to each textually distinct variable inside an \txor{} term. For instance, the second message in the original $\nslxor$ protocol was
			
			\[	[2,N_A \oplus B,N_B ]_{\pk(A)}.	\]

With the number `2' inside this message and numbers `1' and `3' inside the others, the protocol satisfied the first requirement above, but was still vulnerable to an attack. The third requirement above requires that the second message be changed to,

			\[	[2,[4, N_A] \oplus [5, B],N_B ]_{\pk(A)},	\]

that prevents the attack.

Next we deal with multi-protocol environments. Our requirement defined below, namely $\munut$, ensures that encrypted terms in different protocols cannot be replayed into one another. The requirement is an extension of Guttman-Thayer's suggestion to make encrypted terms distinguishable across protocols, to include \txor{} as well.

We first define a set $\XorTerms$ as:

\[
	\{ t \mid (\ex t_1,\ldots,t_n \in T(F,\Vars))(t_1 \oplus \ldots \oplus t_n = t)	\}.
\]

We are now ready to state the main requirement formally:

\begin{definition}{{\em\bf [$\munut$]}}\label{d.munut}

	Two protocols $P_1$ and $P_2$ are $\munutsat$, i.e., $\munutsat(P_1,P_2)$ iff:

\begin{enumerate}

\item Encrypted subterms in both protocols are not $\std$-Unifiable after applying any substitutions to them: 

\[
		\begin{array}{c}
			(\fa t_1 \in \EnCp(P_1), t_2 \in \EnCp(P_2))((\fa  \sigma_1,\sigma_2)(U_{\std}(t_1\sigma_1, t_2\sigma_2)) = \{\}).
		\end{array}
\]
	
\item Subterms of \txor-terms of one protocol (that are not \txor-terms themselves), are not $\std$-Unifiable with any subterms of \txor-terms of the other protocol (that are not \txor-terms as well):

\[
		\left(
				\begin{array}{l}
					\fa t_1 \oplus \ldots \oplus t_n \in \SubTerms(P_1), \\
					 t'_1 \oplus \ldots \oplus t'_n \in \SubTerms(P_2); t, t'
				\end{array}
		\right)
				\left(
				\begin{array}{c}
						(t \in \{ t_1, \ldots, t_n\}) \wedge 
						(t' \in \{t'_1, \ldots, t'_n\})  \\
						(t_1, \ldots, t_n, t'_1, \ldots, t'_n \notin \XorTerms) \wedge \\
						\Ra (\fa \sigma, \sigma')(U_{\std}(t\sigma, t'\sigma') = \{\})
				\end{array}
				\right).
\]

\end{enumerate}

\end{definition}

The first requirement is the same as Guttman-Thayer suggestion. The second requirement extends it to the case of \txor-terms, which is our stated extension in this paper.

The $\msf{NSL}_{\oplus}$ protocol can be transformed to suit this requirement by tagging its encrypted messages as follows:

\begin{center}
\begin{tabular}{ll}

{\bf Msg 1.} $A \to B : [\msf{nsl}_{\oplus}, N_A, A]_{\mathit{pk}(B)}$ \\

{\bf Msg 2.} $B \to A : [\msf{nsl}_{\oplus}, [\msf{nsl}_{\oplus},N_A] \oplus [\msf{nsl}_{\oplus},B], N_B ]_{\mathit{pk}(A)}$ \\

{\bf Msg 3.} $A \to B : [\msf{nsl}_{\oplus}, N_B]_{pk(B)}$\\

\end{tabular}
\end{center}

The constant ``$\msf{nsl}_{\oplus}$" inside the encryptions can be encoded using some suitable bit-encoding when the protocol is implemented. Obviously, other protocols must have their encrypted subterms start with the names of those protocols.

We will later use this requirement in Section \ref{ss.multi-prot-result} to prove that this is sufficient to prevent all multi-protocol attacks on security protocols, even when they use the \txor{}  operator.

\section{Some Lemmas}\label{s.some-lemmas}

In this section, we provide some useful lemmas that we will use later in our main theorems. 

\begin{itemize}

\item In Section~\ref{ss.stdwellty}, we prove that if two non-variable $\StdOps$-terms were obtained by applying two well-typed substitutions for the same term, then the unifier for the two terms is necessarily well-typed;

\item In Section~\ref{ss.combined-unifier}, we first introduce Baader \& Schulz Combination Algorithm (BSCA) to find unifiers for UPs from two disjoint theories, say $=_{E_1}$ and $=_{E_2}$~\cite{BS96}. We will then prove that if the unifier for the $E_1$-UP from a given $(E_1 \cup E_2)$-UP, say $\Gamma$, is empty, then the combined unifier is simply equal to the unifier for the $E_2$-UP from $\Gamma$;

\item In Section~\ref{ss.ACUN-Probs-Only-Constants}, we prove that all \tacun-UPs formed by using BSCA on an original ($\SUA$)-UP that does not have free variables in \txor{} terms, have only constants as subterms.

\end{itemize}

\subsection{Well-typed standard terms unify only under well-typed unifiers}\label{ss.stdwellty}

%%%%%%%%%%%%%%%%%%%%%%%%%%%%%%%%%%%%%%%%%%%%%%%%%%%%%
%																										%
%		Lemma 3. Well-typed unifiers in STD							%
%																										%
%%%%%%%%%%%%%%%%%%%%%%%%%%%%%%%%%%%%%%%%%%%%%%%%%%%%%

In our first lemma, we prove that two $\StdOps$-terms obtained by instantiating the same $\StdOps$-term, with well-typed substitutions, unify only under a well-typed substitution:

\begin{lemma}  {\em\bf [Well-typed $\StdOps$-terms unify only under well-typed unifiers]} \label{l.stdwellty}

	If   $t$   is a non-variable term that is pure wrt $=_\std$ theory:
	
	\[	 (t \notin \Vars) \wedge \pure(t,=_\std),	\]
	
	and $t_1, t_2$ are two terms that are also pure wrt $=_\std$ theory,  and obtained by applying sets of substitutions   $\sigma_1$   and   $\sigma_2$   such that,
	
	\[	t_1 = t\sigma_1   ~\text{and}   ~t_2 = t\sigma_2,  \]
	
	and   $\sigma_1$,  $\sigma_2$   are well-typed:
	
	\[	\wellty(\sigma_1)  \wedge  \wellty(\sigma_2),  \]

and every $x/X \in \sigma_1 \cup \sigma_2$ is such that $x$ is pure wrt $=_\std$:

\[	(\fa x/X \in \sigma_1 \cup \sigma_2)(\pure(x,=_\std)),	\]

then, any unifier for   $t_1$   and   $t_2$, will be necessarily well-typed:
	
	\[	(\fa \tau)( (t_1\tau =_{\std} t_2\tau) \Ra \wellty(\tau) ).  \]

\end{lemma}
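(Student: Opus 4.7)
My plan is to prove the lemma by structural induction on the shared pattern $t$, leveraging the fact that the theory $=_\std$ is essentially syntactic: the only non-free identity, $\sh(t_1,t_2)\cong\sh(t_2,t_1)$, involves two $\Agent$-typed positions, so it is immaterial for type tracking. The proof is then really a careful bookkeeping argument about how types propagate through an almost-syntactic unification procedure.

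The first step is a warm-up fact: well-typed substitutions preserve types, i.e., $\type(u\sigma)=\type(u)$ whenever $\wellty(\sigma)$. This is an easy induction on $u$ using the compositional rule $\type(f(u_1,\ldots,u_n))=f(\type(u_1),\ldots,\type(u_n))$ together with the defining clause $\type(Y\sigma)=\type(Y)$ for a well-typed binding. Applied to the hypotheses of the lemma, it yields $\type(t_1)=\type(t)=\type(t_2)$, so the two terms to be unified share a common type before we even begin analyzing $\tau$.

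The main induction is on $t$. Since $t\notin\Vars$ and $\pure(t,=_\std)$, we have $t=f(u_1,\ldots,u_n)$ with $f\in\StdOps$; the base case $f$ nullary (a constant) is vacuous because $t_1=t_2$ and no bindings are forced. In the inductive step the outer shape of $t_1$ and $t_2$ matches, so any $\std$-unifier $\tau$ must identify them componentwise, possibly up to a swap if $f=\sh$ (benign, since both arguments are of type $\Agent$). Each component pair $(u_i\sigma_1,u_i\sigma_2)$ is handled either (a) by the inductive hypothesis if $u_i$ is a non-variable $\StdOps$-term, or (b) by an auxiliary claim if $u_i$ is a variable $X$: namely that any $\std$-unifier of two same-typed pure-$\std$ terms is well-typed. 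This auxiliary claim is proved by a secondary induction on the combined size of the two terms, using the warm-up fact to carry types through each partial substitution and the purity assumption on the replacements in $\sigma_1,\sigma_2$ to keep every intermediate term pure $\std$. Composing the well-typed component unifiers yields a well-typed $\tau$.

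I expect the main obstacle to be exactly this variable sub-case. When $u_i$ is a variable $X$, the terms $X\sigma_1$ and $X\sigma_2$ can have arbitrary structure beneath them, so the induction on $t$ does not reach in, and we must prove the auxiliary claim in its own right. Getting this right requires (i) the warm-up type-preservation fact so that types of unresolved subterms do not drift as partial substitutions accumulate, and (ii) the purity hypothesis on replacement terms, which is what blocks an $\xor$-rooted alien subterm from entering the unification problem and forcing us into $\acun$-reasoning, where the clean syntactic decomposition would fail.
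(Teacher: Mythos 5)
Your proposal is correct and follows essentially the same route as the paper's proof: a structural induction on the shared pattern $t$ that decomposes the $\std$-unification problem componentwise and uses the well-typedness of $\sigma_1$ and $\sigma_2$ to force each induced binding to respect types. You are in fact more explicit than the paper about the one delicate point --- the case where a component of $t$ is a variable $X$, so that $X\sigma_1$ and $X\sigma_2$ are arbitrary pure-$\std$ terms requiring their own auxiliary induction (and about the benignity of the $\sh$ commutativity swap) --- which the paper folds somewhat loosely into its assumed inductive hypothesis for compound components.
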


\begin{proof}

Let   $t = \mi{op}(t'_1,\ldots,t'_n)$   where   $\mi{op} \in \StdOps$.

Now, 

\begin{center}
\begin{tabular}{rclr}
	$t_1$  &  $=$  &  $t \sigma_1$   &      (from hypothesis), \\
	     &	=  &	$\mi{op}(t'_1\sigma_1,\ldots,t'_n\sigma_1)$  &     (from Def. \ref{d.sub})
\end{tabular}
\end{center}

Similarly,   $t_2 = \op{}(t'_1\sigma_2, \ldots, t'_n\sigma_2)$.  Let $\tau$ be a set of substitutions. Then, we have that,

\[	(t_1\tau =_{\std} t_2\tau)	\Lra (\fa i \in \{ 1, \ldots, n \})(t'_i\sigma_1\tau =_{\std} t'_i\sigma_2\tau).  \]

Without loss of generality, consider

\[	t'_1 \sigma_1\tau =_{\std} t'_1\sigma_2\tau.  \]

Then, since   $\sigma_1$   and   $\sigma_2$   are well-typed, will be well-typed when:

\begin{itemize}

\item	Both   $t'_1\sigma_1$   and   $t'_1\sigma_2$   are variables; or
\item $t'_1\sigma_1$   is a variable and   $t'_1\sigma_2$   is a constant; or
\item $t'_1\sigma_1$   is a constant and   $t'_1\sigma_2$   is a variable.

\end{itemize}

For instance, if   $(t'_1\sigma_1 \in \Vars)$   s.t.   $t'_1\sigma_1 = X$   and   $(t'_1\sigma_2 \in \Constants)$   s.t.   $t'_1\sigma_2 = y$,   then, since   $\wellty(\sigma_1)$   and   $\wellty(\sigma_2)$, we have,

\[	\mi{type}(t'_1) = \mi{type}(X) = \mi{type}(y).  \]

and $\wellty(y/X)$.

Thus, we conclude:

\begin{equation}\label{e.Vars-Constants-WellTy}
	\left(
	\begin{array}{c}
		((t'_1\sigma_1, t'_1\sigma_2 \in \Vars)  \vee \\
		(t'_1\sigma_1 \in \Constants; t'_1\sigma_2 \in \Vars)  \vee \\
		(t'_1\sigma_1 \in \Vars; t'_1\sigma_2 \in \Constants))  	
	\end{array}	
	\right)
	 \wedge  (t'_1\sigma_1\tau =_{\std} t'_1\sigma_2\tau)  \Ra  \wellty(\tau)
\end{equation}

Given this, let us now assume for the purpose of induction that a unifier for   $t'_1\sigma_1$   and   $t'_1\sigma_2$   will be well-typed when both   $t'_1\sigma_1$   and   $t'_1\sigma_2$   are compound terms. i.e.,

\begin{equation}\label{e.Compound-WellTy}
	(t'_1\sigma_1,t'_2\sigma_2 \notin \Vars \cup \Constants) \wedge
  (t'_1\sigma_1\tau =_{\std} t'_1\sigma_2\tau) \Ra \wellty(\tau).
\end{equation}

Combining (\ref{e.Vars-Constants-WellTy}) and (\ref{e.Compound-WellTy}), we can conclude that all the unifiers for  $t'_i\sigma_1$   and   $t'_i\sigma_2$   ($i \in \{1,\ldots,n\}$) are well-typed:

\[	(\fa i \in \{1,\ldots,n\})( (t'_i\sigma_1\tau =_{\std} t'_i\sigma_2\tau) \Ra \wellty(\tau) ).   \]

This implies that our hypothesis is true:

\[	(t\sigma_1\tau =_\std t\sigma_2\tau) \Ra \wellty(\tau).  \]

\end{proof}

%%%%%%%%%%%%%%%%%%%%%%%%%%%%%%%%%%%%%%%%%%%%%%%%%%%%%%%%%
%																												%
%		Combined unifier when one of the unifiers is empty	%		
%																												%
%%%%%%%%%%%%%%%%%%%%%%%%%%%%%%%%%%%%%%%%%%%%%%%%%%%%%%%%%

\subsection{Combined unifier when one of the unifier is empty}\label{ss.combined-unifier}

Our next two lemmas are related to the combined unification of  $(E_1 \cup E_2)$-UPs, where $=_{E_1}$ and $=_{E_2}$ are disjoint.

We first define the variables of a UP, $\Gamma$, as $\Vars(\Gamma)$, where every element of $\Vars(\Gamma)$ is a variable and a subterm of a UP in $\Gamma$:

\[  \Vars(\Gamma) = \{ X \mid (\ex s \stackrel{?}{=} t \in \Gamma)(((X \sqsubset s) \vee (X \sqsubset t)) \wedge (X \in \Vars)) \}.  \]

Similarly, 

\[  \Constants(\Gamma) = \{ X \mid (\ex s \stackrel{?}{=} t \in \Gamma)(((X \sqsubset s) \vee (X \sqsubset t)) \wedge (X \in \Constants)) \}.  \]

Further, we will say that term $t$ belongs to a UP, say $\Gamma$, even if $t$ is one of the terms of one of the problems in $\Gamma$. i.e., 

\[
	(t \in \Gamma) \Lra (\ex t')(t \stackrel{?}{=} t' \in \Gamma).
\]

We will now explain how two UAs $A_{E_1}$ and $A_{E_2}$ for two disjoint theories $=_{E_1}$ and $=_{E_2}$ respectively, may be combined to output the unifiers for a $(E_1 \cup E_2)$-UP using Baader \& Schulz Combination Algorithm (BSCA)~\cite{BS96}.  We give a more detailed explanation in Appendix~\ref{ss.BSCA} using an example UP for the interested reader.

BSCA takes as input a $(E_1 \cup E_2)$-UP, say $\Gamma$, and applies some transformations on them to derive $\Gamma_{5.1}$ and $\Gamma_{5.2}$ that are sets of $E_1$-UP and $E_2$-UPs respectively. We outline the steps in this process below (we formalize these steps directly in Lemma~\ref{l.acun-nosub} where we use BSCA in detail):

\paragraph*{Step 1 (Purify terms)} BSCA first ``purifies" the given $(E = E_1 \cup E_2)$-UP, $\Gamma$, into a new UP, $\Gamma_1$, with the introduction of some new variables, such that, all the terms are ``pure" wrt $=_{E_1}$ or $=_{E_2}$.

\paragraph*{Step 2. (Purify problems)} Next, BSCA purifies $\Gamma_1$ into $\Gamma_2$ such that, every UP in $\Gamma_2$ has both terms pure wrt the same theory, $=_{E_1}$ or $=_{E_2}$.

\paragraph*{Step 3. (Variable identification)} Next, BSCA partitions $\Vars(\Gamma_2)$ into a partition $\VarIdP$ such that, each variable in $\Gamma_2$ is replaced with a representative from the same equivalence class in $\VarIdP$. The result is $\Gamma_3$.

\paragraph*{Step 4. (Split the problem)} The next step of BSCA is to split $\Gamma_3$ into two  UPs $\Gamma_{4.1}$ and $\Gamma_{4.2}$ such that, each set has every problem with terms that are pure wrt $=_{E_1}$ or $=_{E_2}$ respectively.

\paragraph*{Step 5. (Solve systems)} The penultimate step of BSCA is to partition all the variables in $\Gamma_3$ into a size of two: Let $p = \{ V_1, V_2 \}$ is a partition of $\Vars(\Gamma_3)$. Then, the earlier problems ($\Gamma_{4.1}$, $\Gamma_{4.2}$) are further split such that, all the variables in one set of the partition are replaced with new constants in the other set and  vice-versa. The resulting sets are $\Gamma_{5.1}$ and $\Gamma_{5.2}$.

\paragraph*{Step 6. (Combine unifiers)} The final step of BSCA is to combine the unifiers for $\Gamma_{5.1}$ and $\Gamma_{5.2}$, obtained using $A_{E_1}$ and $A_{E_2}$:

\begin{definition}{\em\bf [Combined Unifier]}\label{d.Combined-Unifier}

Let $\Gamma$ be a $E$-UP where $(E_1 \cup E_2) = E$. Let $\sigma_i \in A_{E_i}(\Gamma_{5.i})$, $i \in \{1,2\}$ and let $V_i = \Vars(\Gamma_{5.i})$, $i \in \{1, 2 \}$.

Suppose `$<$' is a linear order on $\Vars(\Gamma)$ such that $Y < X$ if $X$ is not a subterm of an instantiation of $Y$:

	\[	(\fa X, Y \in \Vars(\Gamma))((Y < X) \Ra (\not\ex \sigma)(X \sqsubset Y\sigma)).  \]

Let $\msf{least}(X,T,<)$ be defined as the minimal element of set $T$, when ordered linearly by the relation `$<$'. i.e., 

\[	\msf{least}(X,T,<)	\Lra	(\fa Y \in T)((Y \neq X) \Ra (X < Y)).	\]

Then, the combined UA for $\Gamma$, namely $A_{E_1 \cup E_2}$, is defined such that,

\[	A_{E_1\cup E_2}(\Gamma) = \{ \sigma \mid (\ex \sigma_1,\sigma_2)((\sigma = \sigma_1 																					\odot \sigma_2) \wedge (\sigma_1 \in A_{E_1}(\Gamma_{5.1})) \wedge (\sigma_2 \in A_{E_2}(\Gamma_{5.2}))) \}.
\] 

\noindent
where, if $\sigma = \sigma_1 \odot \sigma_2$, then,

\begin{itemize}

\item The substitution in $\sigma$ for the least variable in $V_1$ and $V_2$ is  from $\sigma_1$ and $\sigma_2$ respectively: \\

$(\fa i \in \{ 1, 2 \})( (X \in V_i) \wedge \msf{least}(X, \Vars(\Gamma), <)  \Rightarrow (X\sigma = X\sigma_i))$; and \\

\item For all other variables $X$, where each $Y$ with $Y < X$ has a substitution already defined, define
$X\sigma = X\sigma_i\sigma$ $(i \in \{1,2\})$: \\

$(\fa i \in \{ 1, 2 \})( (\fa X \in V_i)( (\fa Y)( (Y < X) \wedge (\ex Z)(Z/Y \in \sigma) )) \Rightarrow (X\sigma = X \sigma_i \sigma))$.

\end{itemize}

\end{definition}

It has been proven in~\cite{BS96} that the combination algorithm defined above is a $(E_1 \cup E_2)$-UA for any $(E_1 \cup E_2)$-UP if $E_1$-UA and $E_2$-UA are known to exist and if $=_{E_1}$, $=_{E_2}$ are disjoint. The combination of \tstd{} and \tacun{} UAs which is of interest to us in this paper has been explained to be finitary (i.e., return a finite number of unifiers) when combined using BSCA~\cite{Tuengerthal-TR-2006}.

We now prove a simple lemma which states that the combined unifier of two unifiers is equal to one of the unifiers, if the other unifier is empty.

\begin{lemma}{\em\bf [Combined unifier when one of the unifier is empty]}\label{l.emptyT2sub}

Let $\Gamma, \sigma, \sigma_1, \sigma_2, V_1,  V_2$, and $<$ be as defined above in Def.~\ref{d.Combined-Unifier}. Then,

	\[	
(\sigma = \sigma_1 \odot \sigma_2) \wedge (\sigma_2 = \{ \}) \wedge (V_2 = \{ \}) \Ra (\sigma = \sigma_1).	\]
	
\end{lemma}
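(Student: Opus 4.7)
The plan is to establish $X\sigma = X\sigma_1$ for every variable $X$ by unfolding Def.~\ref{d.Combined-Unifier} and inducting along the linear ordering $<$. Since $V_2 = \{\}$, no variable lies in $V_2$, and the two clauses of Def.~\ref{d.Combined-Unifier} indexed by $i=2$ are vacuous. Hence $\sigma$ is determined entirely by the clauses with $i=1$ applied to variables in $V_1$, and in particular $\sigma_2 = \{\}$ never intervenes.

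For the base case I would take $X_0$ to be the $<$-least element of $\Vars(\Gamma)$. Because $V_2 = \{\}$, necessarily $X_0 \in V_1$, and the first clause of Def.~\ref{d.Combined-Unifier} gives $X_0\sigma = X_0\sigma_1$ directly. For the inductive step, fix $X \in V_1$ and assume $Y\sigma = Y\sigma_1$ for every $Y < X$. The second clause of Def.~\ref{d.Combined-Unifier} gives $X\sigma = X\sigma_1\sigma$, so it suffices to show $X\sigma_1\sigma = X\sigma_1$. By Def.~\ref{d.unification}, the unifiers produced by $A_{E_1}$ are idempotent, so every variable $Z$ occurring in $X\sigma_1$ lies outside the domain of $\sigma_1$; in particular $Z\sigma_1 = Z$ and $Z \neq X$. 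Linearity of $<$ forces either $Z < X$ or $X < Z$; the latter would contradict the defining property of $<$, since $Z \sqsubset X\sigma_1$ witnesses $Z$ as a subterm of an instantiation of $X$. Thus $Z < X$, and the induction hypothesis gives $Z\sigma = Z\sigma_1 = Z$. Because $\sigma$ fixes every variable appearing in $X\sigma_1$, we conclude $X\sigma_1\sigma = X\sigma_1$, closing the step.

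The main obstacle I anticipate is the order-theoretic part of the inductive step: ruling out $X < Z$ for $Z \in \Vars(X\sigma_1)\setminus\{X\}$ requires combining the idempotence of $\sigma_1$ (to get $Z \neq X$) with the defining property of $<$ (to derive a contradiction from $Z \sqsubset X\sigma_1$). Everything else is bookkeeping on the two clauses of Def.~\ref{d.Combined-Unifier}, so once this ordering fact is secured the induction yields $X\sigma = X\sigma_1$ for all $X \in \Vars(\Gamma)$, and hence $\sigma = \sigma_1$.
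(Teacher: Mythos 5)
Your proof is correct and follows essentially the same route as the paper's: unfold Definition~\ref{d.Combined-Unifier}, observe that the clauses indexed by $i=2$ are vacuous because $V_2 = \{\}$ and $\sigma_2 = \{\}$, and conclude $X\sigma = X\sigma_1$ variable by variable, whence $\sigma = \sigma_1$. If anything, your inductive step --- using idempotence of $\sigma_1$ together with the defining property of $<$ to justify $X\sigma_1\sigma = X\sigma_1$ --- makes explicit a reduction that the paper's proof simply asserts when it passes from $X\sigma = X\sigma_i\sigma$ to $X\sigma = X\sigma_1$.
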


\begin{proof}

	Let $\Vars(\sigma) = \{X \mid \_/X \in \sigma\}$.
	
	From Def.~\ref{d.Combined-Unifier}, if $\sigma = \sigma_1 \odot \sigma_2$, then,
	
	\[
		(\fa i \in \{1,2\})((X \in V_i) \wedge \msf{least}(X,\Vars(\Gamma),<) \Ra (X\sigma = X\sigma_i)).	\]
	
	But since $\sigma_2 = \{ \}$ and $V_2 = \{\}$, we have,
	
	\begin{equation}\label{e.combi-unifier-least}	
		(\fa X \in V_1 \cup V_2)(\msf{least}(X,\Vars(\Gamma),<) \Ra (X\sigma = X\sigma_1)).	
	\end{equation}
	
	Also from Def.~\ref{d.Combined-Unifier}, 
	
\[	(\fa X \in V_1 \cup V_2)((\fa Y)((Y < X) \wedge (\ex Z)(Z/Y \in \sigma) \Ra (X\sigma = X\sigma_i\sigma))).  \]

Again, since $\sigma_2 = \{\}$  and $V_2 = \{\}$, this implies,

\begin{equation}\label{e.combi-unifier-othervars}
	(\fa X \in V_1 \cup V_2)((\fa Y)((Y < X) \wedge (\ex Z)(Z/Y \in \sigma) \Ra (X\sigma = X\sigma_1))).
\end{equation}

Combining (\ref{e.combi-unifier-least}) and (\ref{e.combi-unifier-othervars}), we have,

\begin{equation}\label{e.sigma-sigma1}
		(\fa X \in V_1 \cup V_2)(X\sigma = X\sigma_1).
\end{equation}

Further, since $\sigma_2 = \{ \}$, and $V_2 = \{ \}$, we have $\Vars(\sigma) = \Vars(\sigma_1) = V_1$ and hence, combining this with (\ref{e.sigma-sigma1}), we have $\sigma = \sigma_1$.

\end{proof}

\subsection{\tacun{}-UPs in $\nutsat{}$ protocols have only constants as subterms}\label{ss.ACUN-Probs-Only-Constants}

%%%%%%%%%%%%%%%%%%%%%%%%%%%%%%%%%%%%%%%%%%%%%%%%%%%%%
%																										%
%		Lemma 6. ACUN problems have only constants					%
%																										%
%																										%
%%%%%%%%%%%%%%%%%%%%%%%%%%%%%%%%%%%%%%%%%%%%%%%%%%%%%

Our next lemma is a bit lengthy. This lemma is the lynchpin of the paper and forms the crux of our two main theorems in Section~\ref{s.main-results}.

It concerns combined UPs involving the disjoint theories, $=_{\std}$ and $=_{\acun}$. We prove that, if we follow BSCA for finding unifiers for a  $(\SUA)$-UP, say $\Gamma$, that do not have free variables inside \txor{} terms, the terms in all the \tacun{}-UPs ($\Gamma_{5.2}$) from those will always have only constants as subterms. Consequently, we will end up in an empty set of substitutions returned by the \tacun{}-UA for $\Gamma_{5.2}$, even when their  terms are equal in the $=_{\acun}$ theory.

\begin{lemma}{  {\em\bf [$\acun$-UPs have only constants as subterms]}}\label{l.acun-nosub}

Let $\Gamma = \{m \stackrel{?}{=}_{\SUA} t\}$ be a $(\SUA)$-UP that is $(\SUA)$-Unifiable, and where no subterm of $m$ or $t$ is an \txor{} term with free variables:

\[
		(\fa x)
		\left(
			\begin{array}{c}
				( (x \sqss m) \vee (x \sqss t) )  \wedge (n \in \mathbb{N})\\
				(x = x_1 \oplus \ldots \oplus x_n) \wedge (n > 1)
			\end{array}
			\Ra
			(\fa i \in \{1,\ldots,n\})(x_i \notin \Vars)
		\right).
\]

Then,

\[
(\fa m' \stackrel{?}{=}_{\acun} t' \in \Gamma_{5.2};x)
	\left(
	\begin{array}{c}
		\begin{array}{c}
			(x \sqsubset m') \vee (x \sqsubset t')			
		\end{array}
	\Ra
		(x \in \Constants)
	\end{array}
	\right).
\]

\end{lemma}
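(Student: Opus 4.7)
The plan is to trace $\Gamma$ through each step of BSCA and show that any $\Gamma_{5.2}$ arising from a BSCA run that produces a combined unifier contains only constants as subterms. The key is to use the hypothesis (no \txor-subterm has a free variable as a summand) to restrict what variables can appear on the \tacun-side after purification, and then use $(\SUA)$-unifiability of $\Gamma$ to force the purification variables into the \tstd-partition $V_1$ at Step~5.

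First I would analyze Step~1 (term purification). Take any \txor-subterm $t_1 \oplus \ldots \oplus t_n$ occurring in $m$ or $t$. By hypothesis each $t_i \notin \Vars$, so $t_i$ is either a constant (already $=_\acun$-pure) or a compound $\StdOps$-term (alien w.r.t.\ $=_\acun$). BSCA replaces every alien summand by a fresh variable $W_j$ and generates on the \tstd-side the equation $W_j \stackrel{?}{=}_\std s_j$, where $s_j$ is the original summand. After Step~1, every \tacun-pure subterm therefore has the shape $c_1 \oplus \cdots \oplus c_k \oplus W_1 \oplus \cdots \oplus W_\ell$, with the $c_i$ original constants and the $W_j$ purification variables; crucially, no original variable of $m$ or $t$ occurs in any \tacun-pure subterm. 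Steps~2--4 only re-split equations, merge variables via the equivalence partition $\VarIdP$, and separate the two sides; they introduce no further variable sources into the \tacun-system, so the variables left in $\Gamma_{4.2}$ are exactly (representatives of) the purification variables $W_j$.

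Next, for Step~5 (system splitting) I would fix a partition $\{V_1, V_2\}$ of $\Vars(\Gamma_3)$ that contributes to a combined $(\SUA)$-unifier (at least one such partition exists since $\Gamma$ is $(\SUA)$-unifiable) and argue that every purification variable $W_j$ must lie in $V_1$. Indeed, $W_j$ (or its identification-representative) still appears on the \tstd-side in $W_j \stackrel{?}{=}_\std s_j$, where $s_j$ is a \emph{compound} $\StdOps$-term. If $W_j$ were placed in $V_2$, Step~5 would turn $W_j$ into a fresh constant $c_{W_j}$ in $\Gamma_{5.1}$, yielding the equation $c_{W_j} \stackrel{?}{=}_\std s_j$ which has no \tstd-unifier, because the head symbol of a fresh constant cannot match the function head of a compound $\StdOps$-term --- contradicting the assumed solvability of $\Gamma_{5.1}$. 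Hence $W_j \in V_1$, and in $\Gamma_{5.2}$ each $W_j$ is replaced by a fresh constant while the $c_i$ were constants from the start, so every subterm in $\Gamma_{5.2}$ is a constant, which is what was to be shown. The main obstacle is the bookkeeping: carefully naming the purification variables introduced at Step~1, tracking them through the identifications of Steps~2--4, and making watertight the contradiction at Step~5 between $W_j \in V_2$ and $(\SUA)$-unifiability; the degenerate cases (no \txor-subterm at all, or \txor-subterms consisting solely of constants) satisfy the conclusion vacuously or immediately.
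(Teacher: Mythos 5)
Your proposal is correct and follows essentially the same route as the paper's proof: purification replaces every (necessarily non-variable, hence alien) \txor{} summand by a fresh variable equated on the \tstd{} side to a compound $\StdOps$-term, and solvability of $\Gamma_{5.1}$ forces those variables into $V_1$ because a fresh constant cannot $\std$-unify with a compound term, so $\Gamma_{5.2}$ ends up with only constants. The only cosmetic difference is that you handle the variable-identification step by working with representatives rather than arguing, as the paper does, that $\VarIdP$ can never merge a purification variable with an original one.
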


\begin{proof}

	Let   $\sigma$   be a set of substitutions s.t.   $\sigma \in A_{\SUA}(\Gamma)$.
	
	Then, from Def.~\ref{d.Combined-Unifier} ({\em\bf Combined Unifier}),   $\sigma \in \sigma_1 \odot \sigma_2$,   where   $\sigma_1 \in A_{\std}(\Gamma_{5.1})$   and   $\sigma_2 \in A_{\acun}(\Gamma_{5.2})$.
	
Suppose there is a term   $t$   in   $\Gamma$   with an alien subterm   $t'$ wrt the theory   $=_{\acun}$   (e.g.   $[1,n_a]^{\to}_k \oplus b \oplus c$   with the alien subterm of   $[1,n_a]^{\to}_k$).

Then, from the definition of $\Gamma_2$,   it must have been replaced with a new variable in   $\Gamma_2$.   i.e.,

\begin{equation}\label{e.replace-W-NewVars}	
(\fa t, t')
\left(
\left(
\begin{array}{c}
	(t \in \Gamma) \wedge (t = \_ \oplus \ldots \oplus \_) \wedge \\
	(t' \sqss t) \wedge \msf{ast}(t',t,=_{\acun})
\end{array}
\right)
\Ra
(\ex X)
\left(
\begin{array}{c}
	(X \stackrel{?}{=}_{\std} t' \in \Gamma_2) \wedge \\
	(X \in \NewVars)
\end{array}
\right)
\right).
\end{equation}

where $\NewVars \subset \Vars \setminus \Vars(\Gamma)$.

Since \txor{} terms do not have free variables from hypothesis, it implies that every free variable in an \txor{} term in $\Gamma_2$ is a new variable:

\begin{equation}\label{e.acun-terms-no-free-vars}
(\fa t, t')
\left(
\left(
	\begin{array}{c}
		(t \in \Gamma_2) \wedge 
		\pure(t, =_{\acun}) \wedge \\
		(t' \sqss t) \wedge (t' \in \Vars)
	\end{array}
\right)
	\Ra
	(t' \in \NewVars)
\right).
\end{equation}

Since every alien subterm of every term in   $\Gamma$   has been replaced with a new variable (\ref{e.replace-W-NewVars}), combining it with (\ref{e.acun-terms-no-free-vars}), \txor{}  terms in   $\Gamma_2$   must now have only constants and/or new variables as subterms:

\begin{equation}\label{e.newvars-or-constants-Gamma2}
(\fa t, t')
\left(
\left(
	\begin{array}{c}
		\pure(t,=_{\acun}) \wedge \\
		(t \in \Gamma_2) \wedge (t' \sqss t)
	\end{array}
\right)	
\Ra
	(t' \in \NewVars \cup \Constants)
	\right).
\end{equation}

Let   $\VarIdP$   be a partition of   $\Vars(\Gamma_2)$   and   $\Gamma_3 = \Gamma_2 \rho$,   such that \\

  $\Gamma_2 \rho    =    \{  s \stackrel{?}{=} p   \mid   (s \stackrel{?}{=} p := s'\rho \stackrel{?}{=} t'\rho) \wedge s' \stackrel{?}{=} t' \in \Gamma   \}$   \\

where   $\rho$   is the set of substitutions where each set of variables  in   $\VarIdP$   has been replaced with one of the variables in the set:

\[
\rho    = 
  \left\{
\begin{array}{cl}
x/X   \mid  
\left(
\begin{array}{c}
	 (\fa Y_1/X_1, Y_2/X_2 \in \rho;	\vip \in \mi{VarIdP}) 
				 \left(
				 	 \begin{array}{c}
					 	 ( X_1, X_2 \in \vip) \Ra \\
					 	 (Y_1 = Y_2) \wedge \\
					 	 (Y_1, Y_2 \in \vip)  
				 	 \end{array}
				 \right)
\end{array}
\right)
\end{array}
\right\}.
\]

Can there exist a substitution   $X/Y$   in   $\rho$   such that   $Y \in \NewVars$ and   $X \in \Vars(\Gamma)$?

To find out, consider the following two statements:

\begin{itemize}
	
		\item From (\ref{e.replace-W-NewVars}), every new variable   $Y$   in   $\Gamma_2$   belongs to a $\std$-UP in  $\Gamma_2$:
		
		\[	(\fa Y \in \NewVars)( (Y \in \Vars(\Gamma_2) \Ra (\ex t)(\pure(t, =_{\std}) \wedge Y \stackrel{?}{=}_{\std} t \in \Gamma_2)) ).  \]

		\item Further, from hypothesis, we have that \txor{} terms in   $\Gamma$   do not have free variables. Hence, every free variable is a proper subterm\footnote{$t$ is a proper subterm of $t'$ if $t \sqsubset t' \wedge t \neq t'$.} of a purely $=_{\std}$ term:

\[
(\fa X \in \Vars(\Gamma))
\left(
			\begin{array}{c}
					(\ex t \in \Gamma) ( (X \sqss t) \wedge 
					\pure(t, =_{\std}) \wedge (X \neq t))
			\end{array}
\right).
\]

\end{itemize}

The above two statements are contradictory: It is not possible that a new variable and an existing variable can be replaced with each other, since one belongs to a $\std$-UP, and another is always a proper subterm of a term that belongs to a $\std$-UP. 

Hence,   $\VarIdP$   cannot consist of sets where new variables are replaced by   $\Vars(\Gamma)$.   i.e.,

\begin{equation}\label{e.VarIdP-Prop}
	(\nexists X, Y; \vip \in \VarIdP)
	\left(
		\begin{array}{c}
			(Y,X \in \vip) \wedge (Y \in \NewVars) \wedge \\
			(X \in \Vars(\Gamma)) \wedge (X/Y \in \rho) 
		\end{array}
	\right)
\end{equation}

%For instance, in our example in Section \ref{sss.unification}, if   $\VarIdP = \{ \ldots, \{W,A\}, \ldots  \}$   will not lead to successful unification of   $\Gamma$   since   $W = [1,n_a]^{\to}_{\pk(B)}$   and when it is replaced with   $A$,   it becomes   $A = [1,n_a]^{\to}_{\pk(B)}$.   This implies   $[1, n_a]^{\to}_{\pk(B)} = [1, [1, n_a]^{\to}_{\pk(B)} ]$   which is clearly not unifiable.

Writing (\ref{e.VarIdP-Prop}) in (\ref{e.newvars-or-constants-Gamma2}), we have,

\begin{equation}\label{e.newvars-or-constants-Gamma3}
(\fa t, t')
\left(
\left(
	\begin{array}{c}
		\pure(t,=_{\acun}) \wedge \\
		(t \in \Gamma_3) \wedge (t' \sqss t)
	\end{array}
\right)	
\Ra
	(t' \in \NewVars \cup \Constants)
	\right).
\end{equation}

Further, if a variable belongs to a UP of $\Gamma_3$, then the other term of the UP is pure wrt $=_{\std}$ theory:
 
\begin{equation}\label{e.NewVars-in-Gamma3-mapto-STDterms}
(\fa X \in \Vars(\Gamma_3), t)
\left(
	\left(
		\begin{array}{c}
			(X \stackrel{?}{=}_{\std} t \in \Gamma_3)  \vee \\
			(t \stackrel{?}{=}_{\std} X \in \Gamma_3)
		\end{array}
	\right)
	\Ra
		(X \in \NewVars) \wedge
		\pure(t,=_{\std})
\right).
\end{equation}

Now suppose   $\Gamma_{4.2}    =    \{  s \stackrel{?}{=} t   \mid   ( s \stackrel{?}{=} t \in \Gamma_3) \wedge \pure(s,=_{\acun}) \wedge \pure(t,=_{\acun})  \}$, 
$\{ V_1, V_2 \}$   a partition of   $\Vars(\Gamma) \cup \NewVars$,   and

\[	\Gamma_{5.2}   =   \Gamma_{4.2} \beta, 	\]

where, $\beta$ is a set of substitutions of new constants to $V_1$: 

 \[	\beta = \{  x/X   \mid   (X \in V_1) \wedge (x \in \Constants \setminus (\Constants(\Gamma) \cup \Constants(\Gamma_{5.1})))  \}.	\]

From hypothesis,   $\Gamma_{5.2}$   is  $\acun$-Unifiable. Hence, we have:

\[	(\fa \sigma)((\fa m' \stackrel{?}{=}_{\acun} t' \in \Gamma_{5.2})(m'\sigma =_{\acun} t'\sigma) \Lra \sigma \in A_{\acun}(\Gamma_{5.2})).  \]

Now consider a   $\sigma$   s.t.   $\sigma \in A_{\acun}(\Gamma_{5.2})$.

From (\ref{e.newvars-or-constants-Gamma3}), we have that \txor{} terms in   $\Gamma_{5.2}$   have only new variables and/or constants and from (\ref{e.NewVars-in-Gamma3-mapto-STDterms}) we have that if $X \in \Vars(\Gamma_{5.2})$, then there exists $t$ s.t. $X \stackrel{?}{=}_{\std} t \in \Gamma_{5.1}$ and $t$ is pure wrt $=_{\std}$ theory.
 
Suppose   $V_2 \neq \{ \}$.   Then, there is at least one variable, say   $X \in \Vars(\Gamma_{5.2})$.   This implies that   $X$   is replaced with a constant (say  $x$)   in   $\Gamma_{5.1}$.

Since   $X$   is necessarily a new variable and one term of a $\std$-UP,   this implies that   $x$   must equal some compound term made with $\StdOps$.

However, a compound term made with $\StdOps$ can never equal a constant under the $=_{\std}$ theory:

\[	(\not\ex \mi{op} \in \StdOps; t_1, \ldots, t_n; x \in \Constants)(x =_{\std} \mi{op}(t_1,\ldots,t_n)),		\]

a contradiction.

Hence,   $\sigma = \{ \}$,   $V_2 = \{ \}$   and our hypothesis is true that all \txor{} terms in   $\Gamma_{5.2}$   necessarily contain only constants:

\[	(\fa m' \stackrel{?}{=}_{\acun} t' \in \Gamma_{5.2}; x)
			\left(
						\begin{array}{c}
								 (x \sqss m) \vee (x \sqss t) 
								\Ra (x \in \Constants)
						\end{array}
			\right).
\]

\end{proof}

\section{Main Results}\label{s.main-results}

In this section, we will prove our main results. We will first prove that $\nutsat$~protocols are not susceptible to type-flaw attacks in Section~\ref{ss.type-flaw-result}. We will then prove that $\munutsat$~protocols are not susceptible to multi-protocol attacks in Section~\ref{ss.multi-prot-result}.

\subsection{{\sf NUT} prevents type-flaw attacks}\label{ss.type-flaw-result}

We will now prove our first main result that $\nutsat{}$ protocols will not have any type-flaw attacks. The main idea is to show that every unification when solving a constraint sequence from a $\nutsat{}$ protocol results in a well-typed unifier. We follow the outline below:

\begin{enumerate}

	\item We will first establish that normal constraint sequences from $\nutsat{}$ protocols do not contain variables in the target or term set of their active constraints (either freely or inside \txor{} terms), but only subterms of the initial term set;
	
	\item We then infer from Lemma~\ref{l.acun-nosub} that if a $(\SUA)$-UP, say $\Gamma$, does not have free variables inside $\txor{}$ terms, then terms in it's $\Gamma_{5.2}$ will have only constants as subterms;
	
	\item Next, we infer in Lemma~\ref{l.stdwellty} that UPs in $\Gamma_{5.1}$  unify only under well-typed substitutions, if they were created from the same underlying term of the protocol, by applying two well-typed substitutions (which is true for semi-bundles from $\nutsat{}$ protocols, under Assumption~\ref{a.wellty});
	
	\item Finally, the combined unifier for $\Gamma$ is simply the unifier for $\Gamma_{5.1}$, from Lemma~\ref{l.emptyT2sub} ({\bf Combined unifier when one of the unifier is empty}), and hence is always well-typed.

\end{enumerate}
\begin{theorem}\label{t.type-flaw}
	$\nutsat{}$ protocols are secure against type-flaw attacks in the $=_{\SUA}$ theory.
\end{theorem}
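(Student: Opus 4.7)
The plan is to proceed by induction on the length of the satisfiability derivation in (\ref{e.satisfiable}) and show that the accumulated substitution $\sigma_i$ is well-typed at every step. Starting from $\sigma_0 = \{\}$, which is trivially well-typed, only $\msf{un}$ and $\msf{ksub}$ modify the substitution, so it suffices to show that each incremental unifier $\tau$ produced by those two rules is well-typed. By contraposition this rules out the second conjunct of Definition~\ref{d.type-flaw-attack} and hence prevents any type-flaw attack.

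Throughout the induction I will carry two invariants on the active constraint $m:T$ of the current normal sequence: (i) every term in $\{m\} \cup T$ has the form $u\sigma'$ where $u \in \SubTerms(P) \cup \iik$ and $\sigma'$ is a well-typed substitution; and (ii) no $\oplus$-subterm occurring in $\{m\} \cup T$ contains a free variable. Invariant (i) holds initially by Assumption~\ref{a.wellty} together with Assumptions~\ref{a.IIK1}--\ref{a.IIK2}, and each decomposition rule ($\msf{concat}$, $\msf{split}$, $\msf{penc}$, $\msf{senc}$, $\msf{pdec}$, $\msf{sdec}$, $\msf{hash}$, $\msf{sig}$, $\msf{xor_l}$, $\msf{xor_r}$) only reshuffles existing subterms, preserving it. Invariant (ii) holds at the start by Definition~\ref{d.NUT}(c) (no summand of an $\oplus$-subterm of $P$ is a free variable), and is preserved as long as substitutions never introduce a free variable into an $\oplus$-summand position, which I will verify using Definition~\ref{d.NUT}(c) again.

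For the $\msf{un}$ case, let $\tau$ be the unifier produced for $\Gamma = \{m \stackrel{?}{=}_{\SUA} t\}$. By invariant (ii), $\Gamma$ satisfies the hypothesis of Lemma~\ref{l.acun-nosub}, so the $\acun$ subproblem $\Gamma_{5.2}$ produced by Baader--Schulz contains only constants as subterms; hence $V_2 = \{\}$ and the $\acun$-component $\sigma_2$ is empty. Lemma~\ref{l.emptyT2sub} then tells us that the combined unifier reduces to $\sigma_1$, the $\std$-unifier of $\Gamma_{5.1}$. By invariant (i) and Definition~\ref{d.NUT}(a), $m$ and $t$ must descend from a common underlying subterm $u$ of $P$ under two well-typed substitutions---the $\tnut$ condition rules out $\std$-unifying an encrypted protocol component with any structurally different non-variable subterm. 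Lemma~\ref{l.stdwellty} therefore makes $\sigma_1$, and hence $\tau$, well-typed. The $\msf{ksub}$ case is analogous: Definition~\ref{d.NUT}(b) guarantees the encryption key $k$ in the active constraint is not a free variable, so unifying $k$ with $\pk(\epsilon)$ reduces to a purely $\std$ unification of two well-typed instances of the protocol key $\pk(A)$, yielding a well-typed $\tau$ again by Lemma~\ref{l.stdwellty}.

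The most delicate step is maintaining invariant (ii) under the combined effect of substitution and decomposition: one must argue that no $\msf{un}$ application can inject a fresh free variable into an $\oplus$-position anywhere in the remaining constraint sequence. Definition~\ref{d.NUT}(c) is exactly what forbids the pathological case, because it prevents a variable appearing free in one $\oplus$-summand of $P$ from ever being $\std$-unified with a non-variable summand elsewhere. The secondary technical hurdle is to show rigorously that, whenever $\msf{un}$ fires with active constraint $m : T \cup \{t\}$, both $m$ and $t$ really do trace back to a common underlying protocol subterm---the ``same $t$'' hypothesis of Lemma~\ref{l.stdwellty}---which itself proceeds by a subsidiary induction on the rule history that produced $m$ and $t$ from the original strand-terms of $S$.
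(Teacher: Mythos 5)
Your proposal is correct and follows essentially the same route as the paper's own proof: reduce the question to the $\msf{un}$ and $\msf{ksub}$ rules, use \tnut{} Condition~3 (via normalization) to meet the hypothesis of Lemma~\ref{l.acun-nosub} so that the $\acun$-component of the Baader--Schulz unifier is empty, collapse the combined unifier to the $\std$-component by Lemma~\ref{l.emptyT2sub}, and then invoke Lemma~\ref{l.stdwellty} together with \tnut{} Conditions~1 and~2 to conclude that the $\std$-unifier is well-typed. The two ``hurdles'' you flag at the end (tracing $m$ and $t$ back to a common protocol subterm, and the treatment of the fresh variables that purification introduces into $\Gamma_{5.1}$) are exactly the points where the paper's proof also does the most work, so your plan matches it in both structure and emphasis.
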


\begin{proof}

	From Def.~\ref{d.type-flaw-attack} ({\bf type-flaw attacks}),  a protocol is susceptible to type-flaw attacks if a constraint sequence from a semi-bundle of the protocol is satisfiable only with a substitution that is not well-typed.
	
	We will show that this never happens; i.e., every satisfiable constraint sequence from a semi-bundle of a $\nutsat{}$ protocol is satisfiable only with a well-typed substitution.

	Let $P$ be a $\nutsat{}$ protocol, $S$ a semi-bundle from $P$, and $cs$ a constraint sequence from $S$. 	Suppose $cs$ is satisfiable with a substitution  in the $=_{\SUA}$ theory. i.e.,
	
\begin{equation}\label{e.init-stmt}
	\semibundle(S,P) \wedge \conseq(cs,S) \wedge \satisfiable(cs,\_,=_{\SUA}). 
\end{equation}

	From (\ref{e.satisfiable}) ({\bf satisfiability}), suppose we have $r_1,\ldots,r_n \in \Rules$ s.t. 

\begin{equation}\label{e.infer-from-satisfiability}
\begin{array}{c}
\left(
	 \begin{array}{l}
	 	 \appl(r_1,cs,cs_1,\{\}, \sigma_1,=_{\SUA}) \wedge \\
  		\appl(r_2,cs'_1,cs_2,\sigma_1,\sigma_2,=_{\SUA}) \wedge \ldots \\
    	\appl(r_n,cs'_{n-1},cs_n,\sigma_{n-1},\sigma_n,=_{\SUA}) \wedge \\
    	\msf{simple}(cs_n) \wedge\\
   		(\forall i \in \{ 1,\ldots,n\})(cs'_i = \normalize(cs_i))
   \end{array}
 \right).
\end{array}
\end{equation}

Now every $cs'_i$ in (\ref{e.infer-from-satisfiability}) is normalized. Hence, their active constraints do not have variables in the targets or term sets. Further, since $P$ is $\nutsat{}$, no term of the form $t_1\oplus \ldots \oplus t_p$ ($p > 1$) can have a free variable in the set $\{t_1, \ldots, t_p\}$ (from \tnut{} Condition~3). i.e.,

\begin{equation}\label{e.infer-from-lemma5}
\begin{array}{c}
(\fa i \in \{1,\ldots,n\};x)
	\begin{array}{c}	
 				  \left(
						\begin{array}{l}
							\act(m:T,cs'_i) \wedge (p > 1) \\
							(x = m) \vee (x \in \{t_1, \ldots, t_p\}) \wedge \\ 
							(t_1 \oplus \ldots \oplus t_p \in m \cup T)			
						\end{array}
					\right)
					\Ra  
					\left(
						\begin{array}{l}
							(x \notin \Vars(S)) \wedge \\
							(x \in \SubTerms(S\sigma_i))
						\end{array}
					\right).
		\end{array}	
\end{array}	
\end{equation}

From the set $\mi{Rules}$ it is clear that only $\msf{un}$ and $\msf{ksub}$ potentially change the set of substitutions, when applied to a constraint sequence. i.e.,

\begin{equation}\label{e.only-unksub}
	(\forall r \in \Rules)(\appl(r,\_,\_,\sigma,\sigma',\_) \wedge (\sigma \subset \sigma') \Rightarrow (r = \msf{un}) \vee (r = \msf{ksub})).   
\end{equation}

\noindent
Consider rules $\msf{un}$ and $\msf{ksub}$:

\[
	\appl(\msf{un},cs,cs',\sigma,\sigma',=_{\SUA})  
	\Lra
	(\ex m, T, t)
	\left(  
	\begin{array}{l}
	\act(m:T \cup t, cs) \wedge 
	 (cs' = cs_< \tau^{\frown}cs_>\tau) \wedge \\
	 (\sigma' = \sigma \cup \tau) \wedge (\tau \in U_{\SUA}(\{m  \stackrel{?}{=}_{\SUA} t\}))
	\end{array}
	 \right)
\]

\[
	\appl(\msf{ksub},cs,cs',\sigma,\sigma',=_{\SUA})  
	\Lra
	(\ex m, T, t)
	\left(  
	\begin{array}{l}
	\act(m : T \cup [t]^{\to}_k, cs) \wedge \\
	 (cs' = cs_< \tau^{\frown}[m\tau : T\tau \cup [t]^{\to}_k\tau]^{\frown} 
	  cs_>\tau) \wedge \\
	 (\sigma' = \sigma \cup \tau) \wedge (\tau \in U_{\SUA}(\{k \stackrel{?}{=}_{\SUA} \mi{pk}(\epsilon)\}))
	\end{array}
	 \right)
\]

Suppose $\Gamma = \{ m \stackrel{?}{=}_{\SUA} t \}$ where $m = m'\sigma_m\sigma$ and $t = t'\sigma_t\sigma$ and for some $r, r' \in P$, $r\sigma_m, r\sigma_t \in S$.

Suppose $\tau \in U_{\SUA}(\Gamma)$. Then, using Def.~\ref{d.Combined-Unifier} ({\bf Combined Unifier}),  let $\tau \in \tau_{\std} \odot \tau_{\acun}$  where  $\tau_{\std} \in A_{\std}(\Gamma_{5.1})$  and  $\tau_{\acun} \in A_{\acun}(\Gamma_{5.2})$.

From (\ref{e.infer-from-lemma5}), we can infer that the conditions of Lemma~\ref{l.acun-nosub} ({\bf $\acun$ UPs have only constants}) are met:

\begin{equation}\label{e.xorterms-have-novars}
		(\fa x)
		\left(
			\begin{array}{c}
				((x \sqss m) \vee (x \sqss t)) \wedge\\
				(x = x_1 \oplus \ldots \oplus x_n) \wedge (n > 1)
			\end{array}
		\right)
		\Ra
			(\fa i \in \{1,\ldots,n\})(x_i \notin \Vars).
\end{equation}

And therefore, we infer from Lemma~\ref{l.acun-nosub} that:

	\begin{equation}\label{e.Th1-ACUN-Empty-Unif}
		(\fa \Gamma_{5.2};\tau_{\acun} \in A_{\acun}(\Gamma_{5.2}))(\tau_{\acun} = \{\}).  		\end{equation}

Now consider problems in $\Gamma_{5.1}$. Suppose $\tuple{m_1}{t_1} \in \Gamma_{5.1}$. Let $m_1 = x\sigma_m\sigma\rho\alpha$ and $t_1 = y\sigma_t\sigma\rho\alpha$, where $\sigma$ is as defined in rule $\msf{un}$; $x, y \in \SubTerms(P)$; $\rho$ as defined in Lemma~\ref{l.acun-nosub} and $\alpha$ is a set of substitutions s.t.

\[	\Gamma_{5.1}   =   \Gamma_{4.1} \alpha, 	\]

where, $\alpha$ substitutes new constants to $V_2$: 

 \[	\alpha = \{  x/X   \mid   (X \in \Vars(\Gamma_{5.2})) \wedge (x \in \Constants \setminus \Constants(\Gamma))  \}.	\]

From Lemma~\ref{l.acun-nosub}, we have that $\Vars(\Gamma_{5.2}) = \{ \}$. Hence, $\alpha = \{ \}$. 

Also from Lemma~\ref{l.acun-nosub}, we have that, whenever $\Gamma$ is $(\SUA)$-Unifiable, $\Gamma_{4.2}$ will not have any variables of $\Gamma$, and $\Gamma_{5.2}$ will not have any variables at all. Hence, we have that every partition of $\VarIdP$ (defined in Lemma~\ref{l.acun-nosub}) in which there is a variable of $\Gamma$, has only that variable and no others in the partition:

\begin{equation}\label{e.single-ex-vars-in-varidp}
	(\fa \vip \in \VarIdP; X, Y \in \vip)(X \in \Vars(\Gamma) \Ra X = Y).
\end{equation}

Now, $\Vars(\Gamma_{5.1}) = \Vars(\Gamma) \cup \NewVars$. 

From (\ref{e.single-ex-vars-in-varidp}), we have,

\begin{equation}\label{e.ex-vars-wellty}
	(\fa x/X \in \rho)(X \in \Vars(\Gamma) \Ra \wellty(x/X)).
\end{equation}

Now,

\begin{itemize}

	\item From (\ref{e.infer-from-lemma5}), we have that $m, t \in \SubTerms(S\sigma)$;

	\item From BSCA, if $m, t \in \SubTerms(S\sigma)$, and $m_1, t_1 \notin \NewVars$, then $m_1$ and $t_1$ must belong to $\SubTerms(S\sigma\rho)$;

	\item From \tnut{} Conditions~1 and 3, if $m_1$ is $\std$-Unifiable with $t_1$, then $x$ must equal $y$.
	
\end{itemize}

	If $x = y$, since $\wellty(\sigma_m)$ and $\wellty(\sigma_t)$ from Assumption~\ref{a.wellty} ({\bf Honest agent substitutions are always well-typed}), assuming $\wellty(\sigma)$, and $\wellty(\rho)$ from~(\ref{e.ex-vars-wellty}), we can infer from Lemma~\ref{l.stdwellty} ({\bf Well-typed $\std$ terms unify only under well-typed unifiers}) that, $\wellty(\delta)$, where $m_1\delta =_{\std} t_2\delta$, if none of $\NewVars$ exist as subterms of $m_1$ or $t_1$:
	
\begin{equation}\label{e.only-ex-vars-wellty}
	(\fa m_1 \stackrel{?}{=}_{\std} t_1 \in \Gamma_{5.1})
	\left(
		\begin{array}{c}
			(\NewVars \cap \SubTerms(\{m_1,t_1\}) = \{ \}) \\
			\wedge  (m_1\delta 			=_{\std} t_1\delta)
		\end{array}
	\Ra 
		\wellty(\delta)
	\right).
\end{equation}
	
But what if $m_1$ or $t_1$ contain new variables as subterms?

Now the type of the new variables is the type of compound terms that they replace:

\[
		(\fa X \in \NewVars)(X \stackrel{?}{=}_{\acun} t \in \Gamma_2 \Ra \type(X) = \type(t)).
\]

Suppose $X/Y \in \rho$, where $X, Y \in \NewVars$ (note that $X$ or $Y$ cannot belong to $\Vars(\Gamma)$ from equation (\ref{e.VarIdP-Prop}) in Lemma~\ref{l.acun-nosub}).

Suppose there exist some $t_1, t_2$ such that $t_1 \stackrel{?}{=}_{\std} X$ belongs to $\Gamma_{5.1}$ and $t_2 \stackrel{?}{=}_{\std} Y$ belongs to $\Gamma_{5.1}$. Suppose $t_1, t_2$ do not have any new variables as subterms. Then, from (\ref{e.only-ex-vars-wellty}), we have $\wellty(\theta)$, where $t_1\theta =_{\std} t_2\theta$, and hence, we have $\wellty(X/Y)$:

\begin{equation}\label{e.new-vars-also-wellty}
(\fa X, Y \in \NewVars)((X/Y \in \rho) \Ra \wellty(X/Y)).
\end{equation}

Combining (\ref{e.ex-vars-wellty}) and (\ref{e.new-vars-also-wellty}), we have, $\wellty(\rho)$.

Given this, using induction on terms, we conclude similar to concluding (\ref{e.only-ex-vars-wellty}) that every problem in $\Gamma_{5.1}$ unifies under a well-typed substitution: 

\[
	(\fa m_1 \stackrel{?}{=}_{\std} t_1 \in \Gamma_{5.1})((m_1\tau_{\std} =_{\std} t_1\tau_{\std}) \Ra \wellty(\tau_{\std})).	
\]

Now,

\begin{tabbing}
		$\tau$  \=  $=$ \= $\tau_{\std} \odot \tau_{\acun}$  \\
		\> = \> $\tau_{\std} \odot \{\}$ (from~\ref{e.Th1-ACUN-Empty-Unif})  \\
		\> = \> $\tau_{\std}$. (from Lemma~\ref{l.emptyT2sub} ({\bf Combined unifier when one of the unifier is empty}))
\end{tabbing}

Since $\wellty{}(\tau_{\std})$ from above, this implies, $\wellty{}(\tau)$.

Similarly, for $\msf{ksub}$, we can conclude, $\wellty(\tau)$, where $\tau \in A_{\SUA}(\{k \stackrel{?}{=}_{\SUA} \pk(\epsilon)\})$, provided $k$ is not a variable, and indeed it is not by \tnut{} Condition~2.

So the only rules that potentially change the substitution ($\msf{un}$, $\msf{ksub}$) produce well-typed substitutions. We can apply this in (\ref{e.only-unksub}) and write:

\begin{equation}
	(\forall r \in \{r_1,\ldots,r_n\})
	\left(
		\begin{array}{c}
			\left(
				\begin{array}{c}
					\appl(r,\_,\_,\sigma,\sigma',=_{\SUA}) \wedge \\
					\wellty(\sigma)
				\end{array}
			\right)		
			\Rightarrow 
			\wellty(\sigma')
		\end{array}
	\right).   
\end{equation}

Since all other rules except $\msf{un}$ and $\msf{ksub}$ do not change the attacker substitution, we can combine the above statement with (\ref{e.infer-from-satisfiability}) and conclude:

\begin{equation}\label{e.sat-wellty}
\left(
\begin{array}{l}
 \appl(r_1,cs,cs_1,\{\}, \sigma_1, =_{\SUA}) \wedge \\
  	\appl(r_2,cs'_1,cs_2,\sigma_1,\sigma_2, =_{\SUA}) \wedge \ldots \\
    \appl(r_n,cs'_{n-1},cs_n,\sigma_{n-1},\sigma_n, =_{\SUA}) \wedge \\
    \msf{simple}(cs_n) \wedge\\
   	(\forall i \in \{ 1,\ldots,n\})(cs'_i = \normalize(cs_i))
   \end{array}
   \right)
   \Ra
   \wellty(\sigma_n).
\end{equation}

(Note that we concluded $\wellty(\tau_{\std})$ assuming that $\sigma$ in rule $\msf{un}$ was well-typed. Thus, in (\ref{e.sat-wellty}), $\sigma_1$ is well-typed and inductively, all of $\sigma_2, \ldots, \sigma_n$ are well-typed).

Finally, we can combine the above statement with (\ref{e.init-stmt}) and form:

\[
	(\fa cs, S, \sigma)
	\left(
	\left(
	\begin{array}{c}
  	\semibundle(S,P)   \wedge \conseq(\msf{cs},S) \wedge \\
   	\satisfiable(\msf{cs},\sigma, =_{\SUA}) 
	\end{array}
	\right)
	\Ra
	\wellty(\sigma)
	\right).
\]

From Def.~\ref{d.type-flaw-attack} ({\bf type-flaw attack}), this implies, 

\[	\neg\msf{typeFlawAttack}(P, =_{\SUA}).	\]

Since we started out assuming that $P$ is a $\nutsat{}$ protocol, we sum up noting that $\nutsat{}$ protocols are not susceptible to type-flaw attacks.

\end{proof}

\subsection{$\mu$-{\sf NUT} prevents multi-protocol attacks}\label{ss.multi-prot-result}

We will now prove that $\munutsat$ protocols are not susceptible to multi-protocol attacks. 

The idea is to show that if a protocol is secure in isolation, then it is in combination with other protocols with which it is $\munutsat$.

To show this, we will achieve a contradiction by attempting to prove the contrapositive. i.e., if there is a breach of secrecy for a protocol in combination with another protocol with which it is $\munutsat$, then it must also have a breach of secrecy in isolation.

We will follow the outline below:

\begin{enumerate}

	\item We will first form a constraint sequence from a 						semi-bundle that has semi-strands from the 									combination of a secure 						protocol and 						another protocol with which it is 													$\munutsat$;
	\item We will then form another sequence that can be formed solely from a 						semi-bundle of the secure protocol by 											extracting it from the constraint sequence of the 					combination of semi-bundles;
	\item Finally, we will show that any reduction rules to 					satisfy the former resulting in a breach of 								secrecy 				can be equally applied on the 							latter, resulting in 					a breach of secrecy 					in it as well (thereby 									achieving a 				contradiction).

\end{enumerate}

We are now ready to prove our second main theorem.

\begin{theorem}\label{t.multi-prot}

	In the $=_{\SUA}$ theory, if a protocol is secure for secrecy, then it remains so in combination with any other protocol with which it is  $\munutsat$.

\end{theorem}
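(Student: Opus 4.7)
The plan is to prove the contrapositive. Suppose $P_1$ is secure for secrecy in the $=_{\SUA}$ theory, $\munutsat(P_1, P_2)$ holds, yet the combined protocol $P_1 \cup P_2$ is \emph{not} secure for secrecy. I will derive a contradiction by exhibiting a breach of secrecy for $P_1$ alone. Concretely, let $S$ be a semi-bundle from $P_1 \cup P_2$, $cs$ a constraint sequence from $S$, and $sec \in \SecConstants(S)$ a secret constant such that $cs^\frown [sec : T]$ is satisfiable via some rule sequence $r_1,\ldots,r_n$ producing a substitution $\sigma_n$. By Assumption~\ref{a.freshness}, the fresh-constant sets of the two sub-semi-bundles are disjoint, so $sec$ belongs to $\SecConstants(S_1)$ for exactly one of them; WLOG take $S_1 \subseteq S$ to be the strands instantiated from $P_1$-roles and $S_2 = S \setminus S_1$.

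Next, I would construct $cs_1$ as the ``$P_1$-projection'' of $cs$: keep the constraints whose targets are nodes of $S_1$, and purge each term set of contributions sourced from $+$-nodes in $S_2$ (retaining $\iik$). Then $\conseq(cs_1, S_1)$ holds and $sec \in \SecConstants(S_1)$, so it suffices to show that $cs_1^\frown [sec : T']$ is satisfiable, where $T'$ is the projected final term set. This will contradict $\sfs(P_1, =_{\SUA})$.

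The heart of the proof is to show that the witnessing sequence for $cs^\frown [sec : T]$ can be replayed on the projected sequence. Every rule other than $\msf{un}$ and $\msf{ksub}$ acts locally on a single constraint and transfers unchanged once $S_2$-sourced pieces are dropped. For $\msf{un}$, a unification $\{m \stackrel{?}{=}_{\SUA} t\}$ is invoked with $m$ the current target (in the projection, from a $P_1$-node) and $t$ drawn from the term set. By Lemma~\ref{l.acun-nosub}, the component $\Gamma_{5.2}$ of this problem is constant-only, so by Lemma~\ref{l.emptyT2sub} the combined unifier coincides with the $\std$-unifier of $\Gamma_{5.1}$. Now $\munutsat$ Condition~1 forbids any encrypted subterm of $P_1$ being $\std$-unifiable with any encrypted subterm of $P_2$ under any substitutions, and Condition~2 forbids cross-protocol $\std$-unification between individual $\oplus$-summands drawn from $\oplus$-terms of the two protocols. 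Together with Assumption~\ref{a.LTKeys} (long-term keys never appear as intermes of a protocol) and Assumption~\ref{a.Key-Var} (variables inside keys are never sent in plain), this forces every $\msf{un}$ or $\msf{ksub}$ step to unify $P_1$-sourced subterms against $P_1$-sourced subterms; any $P_2$-sourced candidate $t$ is inert and can be deleted from the trace. The attacker substitution built on the projection therefore coincides with $\sigma_n$ restricted to $\Vars(S_1)$, and the final decomposition that exposes $sec$ is carried through unchanged.

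The main obstacle I expect is the careful case analysis around $\oplus$-valued constraints, where $\msf{xor_l}$ or $\msf{xor_r}$ can split an $\oplus$-rooted term originating from an $S_2$-node into summands, and a later $\msf{un}$ step tries to match one of those summands against a $P_1$-summand. Here I would again appeal to $\munutsat$ Condition~2 together with Lemma~\ref{l.acun-nosub} to conclude that any such unification either fails outright or reduces to identifying the summands with a shared public constant already in $\iik$ by Assumptions~\ref{a.IIK1} and~\ref{a.IIK2}; in either case the $S_2$-sourced $\oplus$-summand contributes nothing that the projection cannot supply from $\iik$. A straightforward induction on the length of $r_1,\ldots,r_n$ then produces the desired satisfying derivation of $cs_1^\frown [sec : T']$, yielding the required contradiction with $\sfs(P_1, =_{\SUA})$ and completing the proof.
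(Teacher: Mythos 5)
Your proposal is correct and follows essentially the same route as the paper's proof: a contradiction/contrapositive argument that projects the combined constraint sequence onto the $P_1$-derived strands, replays the satisfying rule sequence on the projection, and handles $\msf{un}$/$\msf{ksub}$ by reducing the combined unifier to the $\std$-unifier via Lemma~\ref{l.acun-nosub} and Lemma~\ref{l.emptyT2sub}, then invoking the two $\munut$ conditions and Assumptions~\ref{a.LTKeys}--\ref{a.freshness} to keep all unifications within $S_1$ (or $\iik$). You even anticipate the same delicate point the paper treats explicitly, namely $\oplus$-summands from $S_2$ entering a later unification, and resolve it the same way.
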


\begin{proof}

Suppose  $P_1$  is a protocol that is secure for secrecy in isolation in the $=_{\SUA}$ theory. i.e.,

\begin{equation}\label{e.sfs-S1}
	 \sfs(P_1, =_{\SUA}). 
\end{equation}
	 
Consider another protocol  $P_2$  such that,  $\munutsat(P_1,P_2)$.  Let,  $S_1$  and  $S_2$  be two semi-bundles from  $P_1$  and  $P_2$  respectively: 

\begin{equation}\label{e.S1P1-S2P2}
	\semibundle(S_1,P_1)	\wedge  \semibundle(S_2,P_2).
\end{equation} 

Consider a constraint sequence  $\mi{combcs}$  from  $S_{\mi{comb}} = S_1 \cup S_2$.  i.e., $\conseq(\combcs, S_{\mi{comb}})$. Consider another constraint sequence  $\isocs$, where, 

{\bf (a)} Targets in  $\combcs$  are targets in  $\isocs$  if the targets belong to  $S_1$:

\begin{equation}\label{e.same-targets}
		(\fa m : \_~\msf{in}~\combcs)( (m \in \mi{Terms}(S_1)) \Ra (m : \_~\msf{in}~ \isocs) ).
\end{equation}
 
{\bf (b)} Term sets in  $\combcs$  are term sets in  $\isocs$  but without terms from  $S_2$:
	
\begin{equation}\label{e.same-termsets}
		\left(
			\begin{array}{l}
				\fa m_1 : T_1, \\
				    m_2 : T_2 ~\msf{in} ~\combcs
			\end{array}
		\right)
		\left(
			\begin{array}{c}
				m_1 : T_1 \prec_{\combcs} m_2 : T_2 \\
				\Ra \\
				(\ex T'_1, T'_2)
				\left(
					\begin{array}{c}
							(m_1 : T'_1 \prec_{\isocs} m_1 : T'_2) \wedge \\
							(T'_1 = T_1 \setminus T''_1) \wedge (T'_2 = T_2 \setminus 																														T''_2) \\
							(\fa t \in T''_1 \cup T''_2)(t \in \SubTerms(S_2))
					\end{array}
				\right)	
			\end{array}
		\right).
\end{equation}

Then, from Def.~\ref{d.constraints} ({\bf Constraints}) we have that $\isocs$ is a constraint sequence from $S_1$ alone. i.e., $\conseq(\isocs,S_1)$.

Suppose  $\combcs$  and  $\isocs$  are normalized. To achieve a contradiction, let there be a violation of secrecy in  $S_{\mi{comb}}$  s.t.  $\combcs$  is satisfiable after an artificial constraint with a secret constant of  $S_1$, say $\mi{sec}$,  is added to it:

\begin{equation}
		(\combcs = [ \_  :  \_, \ldots, \_  :  T ]) \wedge 								 			\satisfiable(\combcs^{\frown}[\mi{sec}  :  T], \_, =_{\SUA}). 
\end{equation}

Suppose $[r_1, \ldots, r_n] = R$, such that $r_1,\ldots,r_n \in \Rules$. Then, from the definition of satisfiability (\ref{e.satisfiable}), using  $R$,   say we have:

\begin{equation}\label{e.comb-satisfiable}
\left(
	\begin{array}{l}
	 (\combcs = [\_ : \_, \ldots, \_ : T]) \wedge \\
	 \appl(r_1,\combcs^{\frown}[\mi{sec} : T],\combcs_1,\{\}, \sigma_1, =_{\SUA})  \wedge \\
  	\appl(r_2,\combcs'_1,\combcs_2,\sigma_1,\sigma_2, =_{\SUA})  \wedge  \ldots \wedge \\
    \appl(r_n,\combcs'_{n-1},\combcs_n,\sigma_{n-1},\sigma_n, =_{\SUA})  \wedge \\
    \simple(\combcs_n)  \wedge 
   (\fa i \in \{  1,\ldots,n \})(\combcs'_i = \normalize(\combcs_i))  
\end{array}
\right).
\end{equation}

From their descriptions, every rule in $\Rules$ adds subterms of existing terms (if any) in the target or term set of the active constraint:

\begin{equation}\label{e.rules-add-subterms}
\left(
	\begin{array}{c}
		 \appl(\_,cs,cs',\_,\_,\_) \wedge	\act(m : T, cs) \wedge \\
	   \act(m' : T', cs')  \wedge (x \in T' \cup m') 
\end{array}
\right)
  	    \Ra 
  	  (x \in \SubTerms(T \cup m)).
\end{equation}

Since every $\combcs'_i$ ($i = 1$ to $n$) in (\ref{e.comb-satisfiable}) is normalized, and since $P_1$ and $P_2$ are $\munutsat$, we have that no \txor{} term in the target or term sets of any of $\combcs'_i$ ($i = 1~\text{to}~n$) have free variables:

\begin{equation}\label{e.xor-terms-no-vars}
	(\fa i \in \{1,\ldots,n\})
		\left(
			\begin{array}{c}
				\act(m : T, \combcs'_i) \wedge (p \in \mathbb{N}) \wedge \\
				(t_1 \oplus \ldots \oplus t_p \in T \cup m) \Ra \\
				(\fa j \in \{1,\ldots,p\})(t_j \notin \Vars)
			\end{array}
	\right).
\end{equation}

Suppose  $\chcombcs$  is a normal, child constraint sequence of  $\combcs$  and  $\chisocs$  is a normal, child constraint sequence of  $\isocs$.

$\msf{un}$  and  $\msf{ksub}$ are the only rules that affect the attacker substitution.  We will show that these are equally applicable on $\chcombcs$ and $\chisocs$. Suppose:

\begin{itemize}

	\item $\Gamma = \{ m \stackrel{?}{=}_{\SUA} t \}$, is a $(\SUA)$-{\sf UP} 					and suppose  $m = m' \subcomb$, $t = t'\subcomb$, where  $m' \in \SubTerms(S_1)$;

	\item Variables in $\subcomb$ are substituted with terms from the same 				semi-bundle: 
	
				\begin{equation}\label{e.disj-subs}
					(\fa x/X \in \subcomb)((\ex i \in \{1,2\})(x,X \in \SubTerms(S_i))).
				\end{equation}

(This is vacuously true if $\msf{un}$ or $\msf{ksub}$ were never applied on $\combcs$, to derive $\chcombcs$, since $\subcomb$ is then empty).

	\item $\Gamma$  is $(\SUA)$-Unifiable. 

\end{itemize}

Let $\tau \in A_{\SUA} (\Gamma)$. Then, from Def.~\ref{d.Combined-Unifier} ({\bf Combined Unifier}), $\tau \in \tau_{\std} \odot \tau_{\acun}$, where $\tau_{\std} \in A_{\std}(\Gamma_{5.1})$ and $\tau_{\acun} \in A_{\acun}(\Gamma_{5.2})$.

Now from BSCA, if  $m_1 \stackrel{?}{=}_{\std} t_1 \in \Gamma_{5.1}$, and $\theta \in U_{\std}(\{m_1 \stackrel{?}{=}_{\std} t_1\})$, then we have the following cases:

\paragraph*{Variables.} If  $m_1$, and/or  $t_1$  are variables, from (\ref{e.xor-terms-no-vars}) and BSCA, they are necessarily new i.e., $m_1, t_1 \in \Vars \setminus \Vars(\Gamma)$ (unless  $m$  and  $t$  are variables, which they are not, since  $\chcombcs$  is normal). Hence, there are no new substitutions in $\theta$ to $\Vars(\Gamma)$ in this case.

\paragraph*{Constants.} If $m_1 \in \Constants(S_1)$, again from BSCA, $t_1$ cannot belong to $\Vars$, and it must be a constant. If $m_1$ is a fresh constant of $S_1$, then $t_1$ must also belong to $S_1$ from Assumption~\ref{a.freshness} ({\bf freshness}) and (\ref{e.disj-subs}), and if $m_1$ is not fresh, $t_1$ could belong to either $\SubTerms(S_1)$ or $\iik$ from Assumption~\ref{a.IIK2}. Further, $\theta = \{ \}$.
	
\paragraph*{Public Keys.} If $m_1 = \pk(\_)$, then $t_1$ must be some $\pk(\_)$  as well. From BSCA, $m_1$ cannot be such that  $\penc{\_}{m_1} \sqsubset m$. Further, there cannot be an \txor{} term, say $\ldots \oplus m_1 \oplus \ldots$ that is a subterm of $m$, from $\munut{}$ Condition~2. The only other possibility is that $m = m_1$. In that case, $t$ must also equal $t_1$, whence, $t$ can belong to $\iik$ from assumption~\ref{a.IIK1} ({\bf Intruder possesses all public-keys}). Hence, we have that, $(\fa x/X \in \theta)((\ex i \in \{1,2\})(x, X \in \SubTerms(S_i)))$. 

\paragraph*{Shared keys.} $m_1$ cannot be a long-term shared-key; i.e., $m_1 \neq \sh(\_,\_)$, since from Assumptions~\ref{a.LTKeys} and~\ref{a.Key-Var}, they do not appear as interms and from the definition of $\Gamma_{5.1}$, $m_1$ is necessarily an interm.
	
\paragraph*{Encrypted Subterms.} 	Suppose $m_1 = m_{11}\subcomb\rho$, $t_1 = t_{11}\subcomb\rho$, where $m_{11}, t_{11} \in \EnCp(S_1 \cup S_2)$ and $\rho$ is a set of substitutions from $\VarIdP$ defined in Lemma~\ref{l.acun-nosub}.
Then, from $\mu$-{\sf NUT} Condition~1 and (\ref{e.rules-add-subterms}), we have, $m_{11}, t_{11} \in 	\EnCp(S_i)$, where $i \in \{1,2\}$. Hence, $(\fa x/X \in \theta)((\ex i \in \{1,2\})(x,X \in \SubTerms(S_i)))$. 			

\paragraph*{Sequences.} If $m_1$ is a sequence, either $m$ must be a sequence, or there must be some $\ldots \oplus m_1 \oplus \ldots$ belonging to $\SubTerms(\{m,t\})$, from BSCA. But $m$ and $t$ cannot be sequences, since $\chcombcs$ is normal. Hence, by $\munut{}$ Condition~2 and (\ref{e.rules-add-subterms}), $m_1, t_1 \in \SubTerms(S_i)\subcomb\rho$, $i \in \{1,2\}$ and $(\fa x/X \in \theta)((\ex i \in \{1,2\})(x,X \in \SubTerms(S_i)))$. 			

In summary, we make the following observations about problems in $\Gamma_{5.1}$.

If $m_1$ is an instantiation of a subterm in $S_1$, then so is $t_1$, or $t_1$ belongs to $\iik$:

\begin{equation}\label{e.t1-belongs-to-S1}
		(\fa m_1 \stackrel{?}{=} t_1 \in \Gamma_{5.1})(m_1 \in \SubTerms(S_1)\subcomb\rho \Ra t_1 \in \SubTerms(S_1)\subcomb\rho \cup \iik).
\end{equation}

Every substitution in $\tau_{\std}$ has both its term and variable from the same semi-bundle:

\begin{equation}\label{e.tau-std-subs}
	(\fa x/X \in \tau_{\std})((\ex i \in \{1,2\})(x,X \in \SubTerms(S_i))).	
\end{equation}

Now consider the UPs in $\Gamma_{5.2}$. Applying (\ref{e.xor-terms-no-vars}) into  Lemma~\ref{l.acun-nosub}, we have that $\tau_{\acun} = \{ \}$. Combining this with (\ref{e.tau-std-subs}), we have: 

\begin{equation}\label{e.tau-pure} (\fa x/X \in \tau)((\ex i \in \{1,2\})(x,X \in \SubTerms(S_i)\subcomb)). 
\end{equation}

Suppose $m = m_1 \oplus \ldots \oplus m_p$ and $t = t_1 \oplus \ldots \oplus t_q$; $p, q \ge 1$, $x = m\tau$, $y = t\tau$ and $m'' =_{\SUA} x$ where $m'' = m'_1 \oplus \ldots \oplus m'_{p'}$, s.t. $(\fa i, j \in \{1,\ldots,p'\})(i \neq j \Ra m'_i\tau \neq_{\SUA} m'_j \tau)$ and $t'' =_{\SUA} y$, where $t'' = t'_1 \oplus \ldots \oplus t'_{q'}$, s.t. $(\fa i,j \in \{1,\ldots,q'\})(i \neq j \Ra t'_i\tau \neq_{\SUA} t'_j\tau)$. Informally, this means that, no two terms in $\{m'_1,\ldots,m'_{p'}\}$ or $\{t'_1,\ldots,t'_{q'}\}$ can be canceled.

Now, $m\tau =_{\SUA} t\tau$ implies, $(\fa i \in \{1,\ldots,p'\})((\ex j \in \{1,\ldots,q'\})(m'_i\tau\rho =_{\std} t'_j\tau\rho))$ with $p' = q'$. 
From $(\ref{e.t1-belongs-to-S1})$ and $\munut$, this means that $m \in \SubTerms(S_1)\subcomb$ implies, $t$ also belongs to $\SubTerms(S_1)\subcomb$ or $\iik$.

Now since $\Vars(m') \cup \Vars(t') \subset \Vars(S_1)$, we have, $m'\subcomb = m'\subiso$, and $t'\subcomb = t'\subiso$, where $\subcomb = \subiso \cup \{x/X \mid x, X \in \SubTerms(S_2) \}$. Combining this with (\ref{e.tau-pure}), we have that, $m'\subcomb\tau =_{\SUA} t'\subcomb\tau \Ra m'\subiso\tau =_{\SUA} t'\subiso\tau$.

Combining these with (\ref{e.same-targets}) and (\ref{e.same-termsets}), we can now write: 
		\begin{equation}\label{e.un-applies-equally}
			(\fa \chcombcs, \chisocs)
			\left(
			\begin{array}{l}
				\appl(\msf{un},\chcombcs,\chcombcs',\subcomb,\subcomb',=_{\SUA}) \\
				\childseq(\chcombcs,\combcs,=_{\SUA}) \wedge \\
				\childseq(\chisocs,\isocs,=_{\SUA})  \Ra \\
				\appl(\msf{un},\chisocs,\chisocs',\subiso,\subiso',=_{\SUA}) 
			\end{array}
			\right).
		\end{equation}

where, the active constraint in $\chcombcs$ and $\chisocs$ only differ in the   term sets:

\[
		\left(  
	\begin{array}{l}
		\act(m : \_ \cup t, \combcs) \wedge 	\act(m : \_ \cup t, \isocs) \wedge \\
		 (\combcs' = \combcs_<\tau^{\frown}
	 	\combcs_>\tau)  \wedge  	 (\isocs' = \isocs_<\tau^{\frown} 
	 	\isocs_>\tau) \wedge \\
	 	(\subcomb' = \subcomb \cup \tau) \wedge 	 (\subiso' = \subiso \cup \tau) 			\wedge 	 	(\tau \in U_{\SUA}(\{m \stackrel{?}{=}_{\SUA} t\}))
	\end{array}
	 \right)
\]

From (\ref{e.tau-std-subs}) we have, $(\fa t \in \SubTerms(S_1))(t\subcomb = t\subiso)$, and hence we have that all the rules in  $\Rules \setminus \{\msf{un},\msf{ksub}\}$ are applicable on the target of the active constraint of  $\chisocs$,  if they were on  $\chcombcs$,  provided they are applied on a term in $\SubTerms(S_1)$:  

\begin{equation}\label{e.target-rules-apply-equally}
\begin{array}{l}
(\fa r \in \Rules) 
	\left(
		\begin{array}{c}
			\appl(r,\chcombcs,\chcombcs',\_,\_, =_{\SUA}) \wedge \\
			\act(m : \_,\chcombcs) \wedge  
			\act(m' : \_,\chcombcs') 	\wedge \\
			\act(m : \_,\chisocs)
		\end{array}
	\right)
	\Ra  \\
	\left(
		\begin{array}{c}
			\appl(r,\chisocs,\chisocs',\_,\_, =_{\SUA}) \wedge 
			\act(m' : \_,\chisocs')
		\end{array}
	\right).
\end{array}
\end{equation}

Similarly, all rules that are applicable on a term in the term set of the active constraint in  $\chcombcs$, say $c$,  are also applicable on the same term of the active constraint in  $\chisocs$, say $c'$ (provided the term exists in the term set of $c'$, which it does from (\ref{e.same-termsets}) and (\ref{e.rules-add-subterms})):

\begin{equation}\label{e.termset-rules-apply-equally}
\begin{array}{l}
(\fa r \in \Rules) 
	\left(
		\begin{array}{c}
			\appl(r,\chcombcs,\chcombcs',\_,\_, =_{\SUA}) \wedge \\
			\act(\_ : \_ \cup t,\chcombcs) \wedge 
			\act(\_ : \_ \cup T' ,\chcombcs') \wedge \\
			\act(\_ : \_ \cup t,\chisocs)
		\end{array}
	\right)
	\Ra \\
	\left(
		\begin{array}{c}
			\appl(r,\chisocs,\chisocs',\_,\_, =_{\SUA}) \wedge 
			\act(\_ : \_ \cup T',\chisocs')
		\end{array}
	\right).
\end{array}
\end{equation}

Finally, we can combine, (\ref{e.comb-satisfiable}),  (\ref{e.target-rules-apply-equally}), (\ref{e.termset-rules-apply-equally}), and (\ref{e.un-applies-equally}) to infer:

\begin{equation}\label{e.satisfiability-achieved-for-cs1}
	\left(
 \begin{array}{l}
	(\isocs = \lan \_ : \_, \ldots, \_ : T \ran) \wedge
  \appl(r_1,\isocs^{\frown}[\mi{sec} : T],\isocs_1,\{\}, \sigma_1, =_{\SUA}) \wedge \\
  \appl(r_2,\isocs'_1,\isocs_2,\sigma_1,\sigma_2, =_{\SUA}) \wedge \ldots \wedge \\
  \appl(r_p,\isocs'_{p-1},\isocs_p,\sigma_{p-1},\sigma_p,=_{\SUA}) \wedge \\
  \msf{simple}(\isocs_p) \wedge 
  (\fa i \in \{ 1,\ldots,p\})(\isocs'_i = \normalize(\isocs_i))
\end{array}
\right).
\end{equation}

\noindent
where $[r_1,\ldots,r_p]$ is a subsequence\footnote{$s'$ is a {\em subsequence} of a sequence $s$, if $s = \_^{\frown}s'^{\frown}\_$.} of $R$ (defined in \ref{e.comb-satisfiable}).

This in turn implies $\satisfiable(\isocs^{\frown}\mi{sec} : T,\sigma_p,=_{\SUA})$ from the definition of satisfiability.

We can then combine this with the fact that $S_1$ is a semi-bundle of $P_1$, and $\isocs$ is a constraint sequence of $S_1$ and conclude:

\[
	\begin{array}{c}
	  \semibundle(S_1,P_1) \wedge \conseq(\isocs,S_1) \wedge 
	  (\isocs = [\_ : \_, \ldots, \_ : T]) \wedge \\
	  \satisfiable(\isocs^{\frown}[\mi{sec} : T], \sigma_p,=_{\SUA}). 
	\end{array}
\]

But from Definition \ref{d.secrecy} ({\bf Secrecy}), this implies, $\neg \sfs(P_1,=_{\SUA})$, a contradiction to the hypothesis. Hence, $P_1$ is always secure for secrecy in the $=_{\SUA}$ theory, in combination with $P_2$ (or any other set of protocols) with which it is $\munutsat$.

\end{proof}

\section{Conclusion}\label{s.conclusion}

In this paper, we provided formal proofs that tagging to ensure non-unifiability of distinct encryptions prevents type-flaw and multi-protocol attacks under the {\sf ACUN} properties induced by the Exclusive-OR operator. We will now discuss some prospects for future work and related work.

\subsection{Future work}\label{ss.future-work}

Our results can be achieved under other equational theories the same way as we achieved them under the \tacun{} theory: When we use BSCA, the unification algorithms for the other theories will return an empty unifier, since their problems will have only constants as subterms. Hence, unifiers only from the standard unification algorithm need to be considered, which are always well-typed for $\nutsat{}$ protocols. In addition, this reasoning has to be given within a symbolic constraint solving model that takes the additional equational theories into account (the model we used, adapted from~\cite{Chev04} was tailored to accommodate only \tacun).

Our result on type-flaw attacks is obviously independent of security properties: It is valid for any property that can be tested on all possible protocol execution traces. Hence, we conjecture that it will also be valid for properties such as observational equivalence, which has been of interest to many protocol researchers of late (e.g.~\cite{BlanchetOakland06,DKR-csf08}). However, this property has been traditionally defined only in the applied pi-calculus. To use the results of this paper, we would have to first define an equivalent definition with symbolic constraint solving which is the model used in this paper (perhaps by extending~\cite{CMAE03}).

We achieved our result on multi-protocol attacks, specifically for secrecy. The reason for this was that, in order to prove that attacks exist in isolation, if they did in combination, we had to have a precise definition as to what an ``attack" was to begin with. However, other properties such as authentication and observational equivalence can be considered on a case-by-case basis with similar proof pattern. 

At the core of our proofs is the use of BSCA. However, their algorithm only works for disjoint theories that do not share any operators. For instance, the algorithm cannot consider equations of the form,

\[		[ a, b ]  \oplus  [ c, d ]   =   [ a \oplus c, b \oplus d ]. \]

We plan to expand our proofs to include such equations in future. However, it can be easily seen that the proof of Theorem~\ref{t.type-flaw} falls apart under this equation. For instance, consider the following unification problem:

\[	[\msf{nonce}, N_A] \stackrel{?}{=} [\msf{nonce},n_b] \oplus [\msf{agent}, a] \oplus [\msf{agent}, b].	\]

Now this problem is not unifiable under $=_{\SUA}$ theory, but it is when we add the new equation above to the theory, since $N_A$ can be substituted with $n_b \oplus a \oplus b$ to make the terms equal, which is an ill-typed substitution. 
It does not seem that a similar effect exists on multi-protocol attacks, but we intend to investigate further in that direction.

The most significant advantage of being able to prevent type-flaw attacks is that analysis could be restricted to well-typed runs only. This has been shown to assist decidability results in the standard, free theory~\cite{Lowe99,RS05} but not under monoidal theories. We are currently in a  pursuit to achieve a decidability result for protocol security in the presence of \txor.
%\footnote{We previously made an attempt at this under the belief that tagging might not prevent type-flaw attacks~\cite{CM07}. We intend to reattempt it by taking advantage of the results in this paper.}.

\subsection{Related work}\label{ss.related-work}

To the best of our knowledge, the consideration of algebraic
properties and/or equational theories for type-flaw and multi-protocol
attacks is unchartered waters with the exception of a recent paper~\cite{CC10}.

\paragraph*{Type-flaw attacks.}
Type-flaw attacks on password protocols were studied by Malladi et
al. in~\cite{MA03}. That is the closest that we know about any study
of type-flaw attacks where the perfect encryption
assumption was relaxed. Some recent works studied type-flaw attacks using new
approaches such as rewriting~\cite{NG06}, and process calculus LySa
~\cite{GBD08}. However, they do not discuss type-flaw attacks under
operators with algebraic properties.

Recently in~\cite{ML09}, we gave a proof sketch that tagging prevents
type-flaw attacks even under \txor. The current paper is an extended,
journal version of~\cite{ML09} with the addition of a new result for
multi-protocol attacks.

A proof was presented in Malladi's PhD dissertation~\cite{Mall04} that type-flaw attacks can be prevented by component numbering with the constraint solving model of~\cite{MS01} as the framework. A similar proof approach was taken by  Arapinis et al.  in~\cite{AD07} using Comon et al.'s constraint solving model ~\cite{CCZ07} as the framework. In~\cite{CM07}, we used the proof style of~\cite{Mall04} to prove the decidability of tagged protocols that use \txor{} with the underlying framework of~\cite{Chev04} which extends~\cite{MS01} with \txor. That work is similar to our proofs since we too use the same framework (\cite{Chev04}). Further, we use BSCA as a core aspect of this paper along the lines of~\cite{CM07}.

\paragraph*{Multi-protocol attacks.} Kelsey et al. in their classical work~\cite{KSW97} showed that for any protocol, another protocol can be designed   to attack it. Cremers studied the feasibility of
multi-protocol attacks on published protocols and found many attacks,
thereby demonstrating that they are a genuine threat to protocol
security~\cite{Cremers06}. However, Cremers did not consider algebraic properties in the analysis.

A study of multi-protocol attacks with the perfect encryption 
assumption relaxed were first studied by Malladi et al. in~\cite{MAM02} 
through ``multi-protocol guessing attacks" on password protocols. 
Delaune et al. proved that these can be prevented by tagging 
in~\cite{DKR-csf08}.

The original work of Guttman et al. in
~\cite{TG00} assumed that protocols would not have type-flaw attacks when
they proved that tagging/disjoint encryption prevents multi-protocol
attacks. But a recent work by Guttman seems to relax that assumption
~\cite{Guttman09}. Both~\cite{TG00} and~\cite{Guttman09} use the strand space model~\cite{THG98}. Our protocol model in this paper is also based on strand spaces but the penetrator actions are modeled as symbolic reduction rules in the constraint solving algorithm of~\cite{Chev04,MS01}, as opposed to penetrator strands in~\cite{THG98}. 
Cortier-Delaune also seem to prove that multi-protocol attacks can be prevented with tagging, which is
slightly different from~\cite{TG00} and considers composed/non-atomic keys~\cite{CD09}. They too seem to use constraint satisfiability to model penetrator capabilities.

None of the above works considered the \txor{} operator or any
other operator that possesses algebraic properties.

In a recent paper that is about to appear in the CSF symposium, Ciobaca and Cortier seem to present protocol composition for arbitrary primitives under equational theories with and without the use of tagging~\cite{CC10}. Their results seem very general and broadly applicable. As future work, they comment in the conclusion of that paper that it is a challenging open problem to address  cases where multiple protocols uses \txor{}, which is solved in this paper.

\paragraph*{XOR operator.} Ryan and Schneider showed in~\cite{RS98} that new attacks can be launched on protocols when the algebraic properties of the \txor{} operator are exploited. In~\cite{CKRT03}, Chevalier et al. described the first NP-decision procedure to analyze protocols that use the \txor{} operator with a full consideration of its algebraic properties. We use an adapted version of their NSL protocol  in this paper as a running example. In an impressive piece of work, Chevalier also introduced a symbolic constraint solving algorithm for analyzing protocols with \txor{}, which we use as our framework in this paper~\cite{Chev04}. 

In an interesting work~\cite{KustersT08}, Kuesters and Truderung showed that the verification of protocols that use the \txor{} operator can be reduced to verification in a free term algebra, for a special class of protocols called $\oplus$-linear protocols\footnote{Kuesters-Truderung define a term to be $\oplus$-linear if for each of its subterms of the form $s \oplus t$, either $t$ or $s$ is ground.}, so that {\sf ProVerif} can be used for verification. 

Chen et al. recently report an extension of Kuesters-Truderung to improve the efficiency of verification by reducing the number of substitutions that need to be considered (thereby improving the performance of {\sf ProVerif}), and a new bounded process verification approach to verify protocols that do not satisfy the $\oplus$-linearity property~\cite{CDP09}.

These results have a similarity with ours, in the sense that we too show that the algebraic properties of \txor{} have no effect when some of the messages are modified to suit our requirements. 

A few months back, Chevalier-Rusinowitch report a nice way to compile cryptographic protocols into executable roles and retain the results for combination of equational theories in the context of compiling~\cite{CR09}. Like other works described above, their work does not seem to use tagging.

\paragraph*{Acknowledgments.} I benefited greatly from the following people's help and guidance: Jon Millen (MITRE, USA) clarified numerous concepts about constraint solving and some crucial aspects of \txor{} unification. Pascal Lafourcade (UFR IMA, France) gave several useful comments and reviews of the paper. Yannick Chevalier (IRIT, France) explained some concepts about his extensions to Millen-Shmatikov model with \txor. More importantly, our joint work toward decidability in~\cite{CM07} helped in laying the structure of proofs in this paper. 

\paragraph*{Funding.} This work funded in part by a doctoral SEED grant by the Graduate School at Dakota State University. I am particularly grateful to Dean~Tom~Halverson (college of BIS) and Dean~Omar~El-Gayar (college of graduate studies and research) for their continued support for my research.

\addcontentsline{toc}{section}{References}
\bibliographystyle{plain}
\bibliography{JSecRet-09}	

\newpage
\appendix
\section{Appendix}\label{s.appendix}

In the appendix, we first provide an index for the notation and terminology in Section~\ref{ss.index}. We then provide a detailed formalization of Baader \& Schulz Algorithm for combined theory unification~\cite{BS96} in Section~\ref{ss.BSCA}.

\subsection{Index - Notation and Terminology}\label{ss.index}

%In this section, we describe the meaning of all the symbols and words used throughout the paper for easy access by the reader. We will describe the symbols in Section~\ref{ss.symbols} and words in Section~\ref{ss.words}.

\subsubsection{Symbols}\label{sss.symbols}

\begin{tabular}{cll}
$[t_1,\ldots,t_n]$	&		&		Sequence of terms $t_1$ through 
$t_n$, that are linearly ordered.	\\
&		&		\\

$\penc{t}{k}$		&		&		$t$ encrypted using $k$ with an asymmetric encryption algorithm.	\\
&		&		\\

$\senc{t}{k}$		&		&		$t$ encrypted using $k$ with a symmetric encryption algorithm.		\\
&		&		\\

$h(t)$			&		&		The hash of $t$ using some hashing algorithm.	\\
&		&		\\

$\mi{sig}_k(t)$		&		&		The signature of $t$ using a private key that is verifiable with the \\ 
&		&		public-key $k$. \\
&		&		\\

$t_1 \oplus \ldots \oplus t_n$	&		&		Terms $t_1$ through $t_n$ XORed together.	\\
&		&		\\

$\frown$		&			&			$s_1^{\frown}s_2$ indicates  concatenation of two sequences $s_1$ and $s_2$.	\\
&		&		\\

$\prec_t$		&		&		A linear order relation obeyed by the 		elements of a sequence $t$; \\
&	&		Read $t_i \prec_t t_j$ as 
$t_i$ precedes $t_j$ in the sequence $t$; \\
&		&		\\

$\prod_{i = 1}^{n}  c_i$		&		&		Sequence concatenation of $c_1$ through $c_n$.   \\
&		&		\\

$\sqss$		&		&		Subterm relation; $t \sqss t'$ indicates 
$t$ is a part of $t'$. \\
&		&		\\

$\Subset$	&		&		Interm relation;  $t \Subset t'$ implies that $t$ equals $t'$ 									or an interm of one of the \\
					&		& 	elements of $t'$ if $t'$ is a sequence or is part of the 												plain-text, \\
					&		&		if $t'$ is an encryption;	\\
					&		&		\\

$\mc{P}(X)$	&		&		Power-set of $X$;	\\
&		&		\\

$x/X$		&		&		$x$ is substituted for the variable $X$;	\\
&		&		\\

$\sigma, \tau, \rho, \alpha, \beta$		&		&		Sets of substitutions;		\\
&		&		\\

$\Sigma$		&		&		Sets of sets of substitutions;	\\
&		&		\\

\end{tabular}

\begin{tabular}{cll}

$t =_{\Th} t'$	&		&		$t$ is equal to $t'$ in the theory $\Th$; \\
&		&		\\

$\Gamma$	&		&		A unification problem or a finite set of equations;		\\
&		&		\\

$A_{E}$			&		&		Unification algorithm that returns the most general unifiers \\ 
&		&		in the theory $=_E$ for a $E$-Unification Problem;	\\
&		&		\\

$(S \cup A)$	&		&		$\std \cup \acun$;		\\
&		&		\\

$\Gamma_3$		&		&		Obtained from $\Gamma_2$ such that, variables in $\Gamma_2$ are replaced by other \\
&		&	 variables in their equivalence classes in a variable identification \\ 
&		&		partition	on the variables called $\VarIdP$;	\\
&		&		\\

$\Gamma_{4.1}$	&		&		$\Gamma_3$	split into problems from only the theory $\Th_1$;	\\
&		&		\\

$\Gamma_{4.2}$	&		&		$\Gamma_3$	split into problems from only the theory $\Th_2$;	\\
&		&		\\

$V_1,V_2$		&		&		$\{ V_1, V_2 \}$ is a partition on the variables of $\Gamma_3$;		\\
&		&		\\

$\Gamma_{5.1}$	&		&		Variables in $\Gamma_{4.1}$ that belong to $V_2$ are replaced by new constants;	\\
&		&		\\

$\Gamma_{5.2}$	&		&		Variables in $\Gamma_{4.2}$ that belong to $V_1$ are replaced by new constants;	\\
&		&		\\

$\alpha, \beta$		&		&		The sets of substitutions for the replacement of variables with \\
&		&	 new constants in $\Gamma_{5.1}$ and $\Gamma_{5.2}$;	\\

$<$		&		&		$X < Y$ indicates that variable $X$ is not a subterm of an instantiation of $Y$;		\\
&		&		\\

$\odot$		&		&		$\sigma_1 \odot \sigma_2$ is the combined unifier of $\sigma_1$ and $\sigma_2$ in the theory $\Th_1 \cup \Th_2$,  \\
&		&	 if $\sigma_1$ is the unifier for $\Gamma_{5.1}$ and $\sigma_2$ is the unifier for $\Gamma_{5.2}$;	\\
&		&		\\

$m~:~T$		&		&		A constraint describing that term $m$ should be derivable by \\
&		&		using attacker actions on the set of terms $T$; \\
&		&		 \\

$+t$	&		&		A node that sends a term $t$; \\
&		&		\\

$-t$	&		&		A node that receives a term $t$; \\

\end{tabular}

\subsubsection{Words}\label{sss.words}

\begin{tabular}{cll}

&		&		\\

$\Vars$			&		&		Set of all variables;	\\
						&		&		\\

$\Constants$		&		&		Set of all constant values that are 
								indivisible (nonce, agent etc.)	\\
						&		&		\\

$\msf{in}$	&		&		$a ~\msf{in} ~as$ represents $a$ is an element in 
										the sequence $as$;	\\
						&		&		\\						

$T(F,\Vars)$		&		&		Term algebra; Set of all terms using function symbols \\
								&		&		$F$ and $\Vars$ 	\\			
								&		&		\\

$\mi{Terms}()$	&		&		Overloaded function returning all the 
										terms in a set of terms, strands, \\										
						&		&		or set of strands.			\\
						&		&		\\

$\type()$		&		&		Function returning the type of a term (agent, nonce, 	\\
						&		&		nonce encrypted with a public-key etc.)\\
						&		&		\\
											
$\SubTerms()$	&		&		Overloaded function returning all the 
										subterms in a set of terms, strands, \\								&		&		protocol, or semi-bundle.			\\
						&		&		\\															
						
$\EnCp()$		&		&		Overloaded function returning all the 
										encrypted subterms \\
						&		&	 	of a term, or set of strands; 	\\
						&		&		\\															
														
$\wellty()$	&		&		Predicate returning $\true$ if a 														substitution or sets of substitutions\\ 							
						&		&		are such that values are substituted to variables of the 												same type;		\\
						&		&		\\
						
$\std$			&		&		Set of identities involving $\StdOps$-Terms that is the basis \\
						&		&		for $=_{\std}$ theory; 	\\																								
					
$\acun$			&		&		Set of identities involving only the $\oplus$ operator to \\
						&		&		reflect it's \tacun{} algebraic properties 	\\													&		&		\\															
						
$\msf{disjoint}(\Th_1,\Th_2)$  &		&	 Predicate returning $\true$ if $\Th_1$ and $\Th_2$ do not share operators;	\\
						&		&		\\															

$\msf{ast}(t',t,\Th)$		&		&		Predicate returns $\true$ if $t'$ is a subterm of $t$ \\
						&		&	 	and made with operators not belonging to $\Th$; \\
						&		&		\\															
$\pure(t,\Th)$		&		&		Predicate returning $\true$ if $t$ has no alien subterms wrt \\
						&		&		operators of $\Th$;  \\
						&		&		\\																					

$\NewVars$	 &		&		A subset of $\Vars$ that did not 									previously appear in a \\
						&			&		unification problem; \\
						&		&		\\													
						
\end{tabular}

\begin{tabular}{cll}	

$\NewConstants$	 &		&		A subset of $\Constants$ that did 									not previously appear in a \\
							&			&		unification problem; \\
						&		&		\\															

$\VarIdP$		&		&		A partition on all the variables in a 																					unification problem;  \\
						&		&		\\

$\msf{least}(x,X,<)$		&		&		Returns $\true$	if $x$ is the minimal 										element of $X$ wrt the \\
									&		&		linear relation $<$;	\\					 							&		&		\\

%%%%%%%%%%%%%%%%%%%%%%%%%%%%%%%%%%%%%
%																		%
%				Prot-Model									%
%																		%
%%%%%%%%%%%%%%%%%%%%%%%%%%%%%%%%%%%%%

$\mi{Node}$		&		&		Tuple $\tuple{\pm}{\Term}$  \\
						&		&		\\															
$\mi{Strand}$		&		&		Sequence of nodes  \\
						&		&		\\

$\FreshVars$		&		&	 Variables in a strand   that are of the \\
								&		&	 type nonce, session-key etc.; \\
								&		&		\\

$\LTKeys()$			&		&		Returns the set of subterms in a protocol that 
												resemble \\
 								&		&		$\sh(\_, \_)$;	\\
								&		&		\\
semi-strand	&		&	 Strand obtained by instantiating the known variables \\
								&		&		of a role;	\\
								&		&		\\
semi-bundle	&		&		Set of semi-strands;		\\
								&		&		\\

$\constraint(\tuple{m}{T})$ 	&		&	 $\true$ if $m~:~T$ is a constraint with $m$ as the target and \\
								&		&		$T$ as the termset \\
								&		&		\\

$\conseq(cs,S)$ 	&		&	 $cs$ is a constraint sequence from the semi-bundle $S$; \\
								&		&		\\
								
$\simple(c)$		&		&		$c$ is a constraint with only a variable on its target;  \\
								&		&		\\

$\simple(cs)$		&		&		$cs$ is a constraint sequence with only simple constraints;  \\
								&		&		\\

$\act(c,cs)$		&		&	 $\true$ if all constraints in $cs$, prior to $c$ are simple;  \\
								&		&		\\

$cs_<$			&		&	 Returns the constraint sequence prior to the active \\ 									&		&		constraint of $cs$;  \\
								&		&		\\

$cs_>$			&		&	 Returns the constraint sequence after to the active \\		  								&		&	 constraint of $cs$;  \\
								&		&		\\

\end{tabular}

\begin{tabular}{cll}

$\appl(r,cs,cs',\sigma,\sigma',\Th)$		&		&	 $\true$ if $r$ is applicable on $cs$, transforming it into $cs'$, and  \\
&		&		changing its substitution from $\sigma$ to $\sigma'$ in the theory $\Th$;  \\
							&		&		\\

$\normal(cs)$		&		&	 $\true$ if $cs$ has no free variables, or pairs  \\
			&		&		in the target or termset of its active constraint;  \\
								&		&		\\

$\normalize(cs)$		&		&	 Function that transforms $cs$ into a normal \\ 
&		&		constraint sequence	and returns it;  \\							&		&		\\

$\msf{typeFlawAttack}(P,\Th)$		&		&	 $\true$ if a constraint sequence from a semi-bundle of $P$ \\
	&		&		can only be	satisfied with an ill-typed substitution in the \\
	&		&		theory $\Th$;	\\
&		&		\\

\end{tabular}

\begin{tabular}{cll}

$\sfs(P,\Th)$		&		&	 $\true$ if protocol $P$ does not have a potential breach \\
	&		&		of secrecy in the theory $\Th$;		\\
&		&		\\

$\nutsat(P)$		&		&		$\true$ if $P$ satisfies three conditions including non-unifiable \\
&		&		encrypted subterms (in the $\SUA$ theory), no free \\
&		&		variables as asymmetric keys inside \\
&		&		\txor{} terms;  \\
&		&		\\

$\munutsat(P_1,P_2)$		&		&	 $\true$ if	encrypted subterms of $P_1$ are non-unifiable with the \\
&		&		 encrypted subterms of $P_2$, in the $\SUA$ theory;  \\
&		&		\\
								
\end{tabular}

\subsection{Bader \& Schulz Combined Theory Unification Algorithm (BSCA)}\label{ss.BSCA}

We will now consider how two UAs for two disjoint theories $=_{E_1}$ and $=_{E_2}$, may be combined to output the unifiers for $(E_1 \cup E_2)$-UPs using Baader \& Schulz Combination Algorithm (BSCA)~\cite{BS96}.

We will use the following $(\SUA)$-UP as our running example\footnote{We omit the superscript $\to$ on encrypted terms in this problem, since they obviously use only asymmetric encryption.}:

\[ \left\{ [1,n_a]_{\mi{pk}(B)} \stackrel{?}{=}_{\SUA} [1,N_B]_{\mi{pk}(a)}   \oplus   [2,A] \oplus  [2,b] \right\}. \]

BSCA takes as input a $(E_1 \cup E_2)$-UP, say $\Gamma$, and applies some transformations on them to derive $\Gamma_{5.1}$ and $\Gamma_{5.2}$ that are $E_1$-UP and $E_2$-UP respectively. 

\subsubsection*{Step 1 (Purify terms)} BSCA first ``purifies" the given set of $(E = E_1 \cup E_2)$-UP, $\Gamma$, into a new set of problems $\Gamma_1$, such that, all the terms are pure wrt $=_{E_1}$ or $=_{E_2}$.

If our running example was $\Gamma$, then, the set of problems in $\Gamma_1$ are $W \stackrel{?}{=}_{\std} [1,n_a]_{\mi{pk}(B)}$, $X  \stackrel{?}{=}_{\std}  [1,N_B]_{\mi{pk}(a)}, Y  \stackrel{?}{=}_{\std} [2,A]$, $Z  \stackrel{?}{=}_{\std}  [2,b]$, and $W \stackrel{?}{=}_{\acun} X \oplus Y \oplus Z$, where $W, X, Y, Z$ are obviously new variables that did not exist in $\Gamma$.

\subsubsection*{Step 2. (Purify problems)} Next, BSCA purifies $\Gamma_1$ into $\Gamma_2$ such that, every problem in $\Gamma_2$ has both terms pure wrt the same theory.

For our example problem, this step can be skipped since all the problems in $\Gamma_1$ already have both their terms purely from the same theory ($=_{\std}$ or $=_{\acun}$)). 

\subsubsection*{Step 3. (Variable identification)} Next, BSCA partitions $\Vars(\Gamma_2)$ into a partition $\VarIdP$ such that, each variable in $\Gamma_2$ is replaced with a representative from the same equivalence class in $\VarIdP$. The result is $\Gamma_3$.

In our example problem, one set of values for $\VarIdP$ can be 
\[ \left\{ \{A\},\{B\},\{N_B\}, \{W\},\{X\},\{Y,Z\} \right\}. \]

\subsubsection*{Step 4. (Split the problem)} The next step of BSCA is to split $\Gamma_3$ into two UPs $\Gamma_{4.1}$ and $\Gamma_{4.2}$ such that, each of them has every problem with terms from the same theory, $\Th_1$ or $\Th_2$.
 
Following this in our example,

\[  \Gamma_{4.1} = \left\{ W \stackrel{?}{=}_{\std} [1,n_a]_{\mi{pk}(B)}, X  \stackrel{?}{=}_{\std} [1,N_B]_{\mi{pk}(a)}, Y \stackrel{?}{=}_{\std} [2,A], Z \stackrel{?}{=}_{\std} [2,b]  \right\}, \] and 

\[ \Gamma_{4.2} = \left\{ W \stackrel{?}{=}_{\acun} \xorseq{X}{Y}{Y} \right\}. \]

\subsubsection*{Step 5. (Solve systems)} The penultimate step of BSCA is to partition all the variables in $\Gamma_3$ into a size of two: Let $p = \{ V_1, V_2 \}$ is a partition of $\Vars(\Gamma_3)$. Then, the earlier problems ($\Gamma_{4.1}$, $\Gamma_{4.2}$) are further split such that, all the variables in one set of the partition are replaced with new constants in the other set and  vice-versa. The resulting sets are $\Gamma_{5.1}$ and $\Gamma_{5.2}$.

In our sample problem, we can form $\{ V_1, V_2 \}$ as $\{ \Vars(\Gamma_3), \{\} \}$. i.e., we choose that all the variables in problems of $\Gamma_{5.2}$ be replaced with new constants. This is required to find the unifier for the problem (this is the partition that will successfully find a unifier).

So $\Gamma_{5.1}$ stays the same as $\Gamma_{4.1}$, but $\Gamma_{5.2}$ is changed to 

\[	\Gamma_{5.2} = \Gamma_{4.2} \beta  
=	 \left\{ W \stackrel{?}{=}_{\acun} \xorseq{X}{Y}{Y} \right\} \beta  =  \left\{ w \stackrel{?}{=}_{\acun} \xorseq{x}{y}{y} \right\}.
\]

 i.e., $\beta = \left\{ w/W, x/X, y/Y \right\}$, where, $w, x, y$ are constants, which obviously did not appear in $\Gamma_{5.1}$. 

\subsubsection*{Step 6. (Combine unifiers)} The final step of BSCA is to combine the unifiers for $\Gamma_{5.1}$ and $\Gamma_{5.2}$, obtained using $A_{E_1}$ and $A_{E_2}$. This was given in Def.~\ref{d.Combined-Unifier}.

\end{document}